\documentclass[a4paper,10pt]{scrartcl}
\usepackage{amssymb}
\usepackage{array}
\usepackage{graphicx}
\usepackage[english]{babel}
\usepackage[utf8]{inputenc}
\usepackage{amsmath}
\usepackage{amsthm}
\usepackage{units}
\usepackage{dsfont}
\usepackage[arrow, matrix, curve]{xy}
\usepackage[left,modulo]{lineno}
\newcommand{\tu}[1]{\textup{#1}}

\newcommand{\Abb}[4]{\left\{ \begin{array}{ccc}
                               #1 & \rightarrow &#2\\
			       #3 &\mapsto &#4
                               \end{array}\right.}
\newcommand{\supp}{\textup{supp}}
\newcommand*{\operp}{\perp\mkern-20.7mu\bigcirc}

\newcommand{\N}{\mathbb{N}}
\newcommand{\R}{\mathbb R}
\newcommand{\Z}{\mathbb Z}
\newcommand{\C}{\mathbb{C}}






\newcommand{\g}{\mathfrak g}
\newcommand{\f}{\mathfrak f}

\theoremstyle{plain}
\newtheorem{Satz}{Satz}
\newtheorem{lem}[Satz]{Lemma}
\newtheorem{prop}[Satz]{Proposition}
\newtheorem{thm}[Satz]{Theorem}
\newtheorem{cor}[Satz]{Corollary}
\theoremstyle{definition}

\newtheorem{Def}{Definition}
\newtheorem{hyp}{Hypothesis}
\newtheorem{exmpl}{Example}

\newenvironment{customhyp}[1]
  {\innercustomhyp}
  {\endinnercustomhyp}

\theoremstyle{remark}
\newtheorem{rem}{Remark}

\numberwithin{equation}{section}
\numberwithin{Satz}{section}
\numberwithin{exmpl}{section}
\numberwithin{hyp}{section}
\numberwithin{rem}{section}

\title{Equivariant spectral asymptotics for $ h$-pseudodifferential operators}
\author{Tobias Weich}

\begin{document}

\maketitle

\begin{abstract}
We prove equivariant spectral asymptotics for $ h$-pseudodifferential operators for compact orthogonal group actions generalizing results of El-Houakmi and Helffer \cite{EH91} and Cassanas \cite{Cas06}. Using recent results for certain oscillatory integrals with singular critical sets \cite{Ram14} we can deduce a weak equivariant Weyl law. Furthermore, we can prove a complete asymptotic expansion for the Gutzwiller trace formula without any additional condition on the group action by a suitable generalization of the dynamical assumptions on the Hamilton flow.
\end{abstract}
\tableofcontents
\section{Introduction}

Semiclassical analysis  provides a rigorous way to establish a connection between quantum physics and classical mechanics \cite{Rob87,Zw12}. Mathematically, a non-relativistic quantum mechanical system is described by the spectrum of an $h$-pseudodifferential operator ($h$-PDO) $\hat H$ on $\R^n$ while the classical system is described by the Hamilton-flow of the associated symbol $H(x,\xi)$ on the symplectic vector space $T^*\R^n$. From a mathematical point of view the question of connecting quantum physics to classical mechanics thus consists in proving asymptotic formulae for the spectrum of $\hat H$ in terms of $H$ in the limit $ h \to 0$.

The two main theorems which have been established in this field are the Weyl law and the Gutzwiller trace formula. While the Weyl law gives an estimate for the number of eigenvalues below a certain energy in terms of the phase space volume of the corresponding classical energy shell, the Gutzwiller trace formula gives an asymptotic formula for the correlation of the eigenvalues in  terms of periodic orbits of the classical flow. 

The first version of Weyl's law was already proven in 1912  for special second order partial differential equations by Hermann Weyl \cite{We12}, who wanted to give a rigorous foundation for the Rayleigh-Jeans law for black body radiation. This work has been generalized in various different works, among which we want to mention the work of Hörmander \cite{Hor68}, who extended these results to elliptic pseudodifferential operators on compact manifolds, using the approach of Fourier integral operators (FIOs), and the work of Helffer and Robert \cite{HR83b} in the context of elliptic $ h$-PDOs on $\R^n$. 

The connection between spectral correlations and periodic orbits was first established in physics by the pioneering work of Gutzwiller \cite{G71} and Balian and Bloch \cite{BB69} around 1970. Shortly later, between 1973 and 1975, similar results were rigorously proven for elliptic operators on compact manifolds by Colin de Verdière \cite{CdV73I,CdV73II,CdV07}, Chazarain \cite{Cha74}, and Duistermaat and Guillemin \cite{DG75}. For $ h$-PDOs, rigorous proofs have finally been given in the 90s by Paul and Uribe \cite{PU91,PU95} using a trick of Colin de Verdière \cite{CdV79} which allows to link the semiclassical problem to homogeneous PDOs, and by Meinrenken using a direct FIO approach \cite{Mei92}.

Symmetries of physical systems are known to simplify their study significantly. If for example the Hamiltonian $\hat H$ commutes with a compact Lie group $G$ acting linearly on $\R^n$, then $\hat H$ leaves the Peter-Weyl decomposition $L^2(\R^n)=\bigoplus_{\chi\in \hat G} L^2_\chi$ of this group action invariant. Accordingly, one can restrict the Hamilton operator to one of the subspaces $L^2_\chi$ belonging to an irreducible representation $\chi$ of $G$. One can then study the spectral asymptotics of the symmetry-reduced Hamilton operator $\hat H_\chi:=\hat H_{|L^2_\chi}$. This amounts to the question of counting the eigenvalues of this reduced Hamilton operator (equivariant Weyl law) and the question of correlations in the spectrum of $\hat H_\chi$ (equivariant Gutzwiller trace formula).

For elliptic operators on compact manifolds, the leading term of the equivariant Weyl law has been obtained for general group actions by Donnelly \cite{Don78} and Brüning-Heintze \cite{BH79} using heat kernel methods, which however do not allow to obtain remainder estimates. First remainder estimates were only obtained for actions with only one orbit type by Donnelly \cite{Don78}, Brüning-Heintze \cite{BH79}, Brüning \cite{Bru83} and Guillemin-Uribe \cite{GU90} in the homogeneous setting, and by Helffer-Robert \cite{HR84,HR86} and El-Houakmi-Helffer \cite{EH91} for $ h$-PDOs. When trying to obtain remainder estimates for general group actions, it turns out that singularities in the critical set of the appearing phase function cause serious problems. Recently it has been possible to obtain remainder estimates for the equivariant Weyl law for elliptic operators on compact manifolds which are invariant under general compact group actions \cite{Ram14} by partially resolving the singularities of the critical set (
see also \cite{CR09} for preliminary results). 

In the derivation of equivariant Gutzwiller formulae, singularities in the critical set are a major obstruction in deriving the desired spectral asymptotics, too. There are two sources for these singularities: The group action and the Hamilton dynamics. Because of this, in all articles on equivariant Gutzwiller trace formulae (Guillemin-Uribe \cite{GU90} for elliptic PDOs on compact manifolds and Cassanas \cite{Cas06fg,Cas06} for $ h$-PDOs) the spectral asymptotics only hold under two conditions: The clean-flow or non-degenerate orbit condition, which are conditions on properties of the Hamilton flow (see e.g. \cite[Definition 4.4]{Cas06}), and the condition of reduction, which is a condition on the group action (see e.g. \cite[Definition 2.1]{Cas06}).

In the present work, equivariant spectral asymptotics of $ h$-PDOs for general compact group actions are derived. Concerning the number of eigenvalues, we prove a  weak equivariant Weyl law by reducing the problem to oscillatory integrals with singular critical sets which already appeared in \cite{Ram14}. Regarding the spectral correlations, we drop the condition on the group action on which the clean flow and non-degenerate orbit conditions are usually based. Instead we generalize the non-degenerate orbit condition for arbitrary group actions. Based on this new generalized condition on the Hamilton flow, we can derive a complete asymptotic expansion for the equivariant Gutzwiller trace formula without additional reduction assumptions. Furthermore we present a simple explicit example of a 3-dimensional Harmonic oscillator which illustrates that these new conditions are much less restrictive. 

\textbf{Acknowledgements:} I like to thank Pablo Ramacher for introducing me to this topic and for the support and feedback on this work. I also acknowledge helpful discussions with Yves Colin de Verdière during my stay at Grenoble which was afforded by the ANR via the project 2009-12 METHCHAOS. This work has been made possible by a grant of the German National Academic Foundation.

\section{Definitions and Notations}
\subsection{Classical Hamilton mechanics and quantization}
We will consider Hamilton systems on the phase space $\R^{2n}$. Points in phase space will commonly be denoted by $z\in \R^{2n}$, respectively by $(x,\xi)\in \R^{2n},~x,\xi\in \R^n$, if we want to distinguish between the space variable $x$ and momentum variable $\xi$. Even if the tangent space $T_z\R^{2n}$ is canonically isomorphic to $\R^{2n}$ we will emphasize the difference by denoting tangent vectors by Greek letters $\alpha, \beta\in T_z\R^{2n}$. The $2n\times2n$ matrix $J=\left(\begin{array}{cc}0&\mathds 1\\-\mathds 1&0\end{array}\right)$ defines  a symplectic form by $\omega(\alpha,\beta) := \langle\alpha,J\beta\rangle$ where $\langle,\rangle$ is the standard scalar product on $\R^{2n}$. 

A Hamiltonian on the symplectic manifold $(\R^{2n},\omega)$ is a smooth function $H:\R^{2n}\to \R$. Via the symplectic structure we can associate a Hamilton vector field $X_H$ to the Hamiltonian $H$ by the condition $dH(\bullet)=\omega(X_H,\bullet)$. For the trivial symplectic manifold ($\R^{2n},\omega$) the Hamilton vector field is simply given by $J\nabla H(z)$, where $\nabla$ denotes the gradient in $\R^{2n}$. This vector field generates the Hamilton flow  $\Phi_t: \R^{2n}\to \R^{2n}$ by setting 
\begin{equation}\label{eq:HamiltonEquations}
\tfrac{d}{dt}\Phi_t(z)=J\nabla H(\Phi_t(z)).
\end{equation} 
In order to simplify notation we will sometimes write $(x_t,\xi_t)=z_t=\Phi_{t}(z)$. 

For a given energy $E\in \R$, the energy shell is denoted by $\Sigma_E:=H^{-1}(E)$. As $\omega$ is antisymmetric, we have $0=\langle \nabla H(z_t),J\nabla H(z_t)\rangle =\tfrac{d}{dt} H(z_t)$. Thus, the energy is preserved along the flow, and the Hamilton flow leaves $\Sigma_E$ invariant. Furthermore, $\nabla H(z)\perp J\nabla H(z)$, and if $H$ is non-degenerate in $z$, i.e.~if $\nabla H(z)\neq 0$, then $\Sigma_E$ is locally smooth and $T_z\Sigma_E=(\nabla H(z))^\perp$. 

Using a general formula for the Lie-derivative of a differential form we calculate
\[
 \frac{d}{dt}_{|t=0} (\Phi_t^*\omega)=\mathcal L_{X_H} \omega = d(X_H\lrcorner \omega) + X_H\lrcorner d\omega = ddH = 0
\]
where we used the closeness of $\omega$ and the definition of $X_H$. Consequently the symplectic form is preserved under the Hamilton flow and the flow consists thus of symplectomorphisms. 

In order to discuss the quantization of the classical Hamiltonian, we first introduce the notion of order functions and symbol classes (see e.g. \cite[Chapter 4]{Zw12} for a more detailed introduction).
\begin{Def}[Order function]
A measurable function $m:\R^{2n}\to\R^+$ is called \emph{order function} if there exist constants $C>0$ and $N\in \R$ such that
\[
 m(z)\leq C\langle z-z'\rangle^N m(z')
\]
where $\langle z\rangle:=\sqrt{1+|z|^2}$.
\end{Def}
A standard example for such an order function is e.g. $m(x,\xi)=\langle x\rangle^a\langle \xi\rangle^b$ with $a,b\in \R$.
\begin{Def}[Symbol class]
 Let $m$ be an order function. Then the corresponding \emph{symbol class} is defined by
 \[
  S(m):=\{a\in C^\infty(\R^{2n}),~\forall\,\gamma\in \N^{2n} \,\exists\,C_\gamma \tu{ with }|\partial^\gamma a|\leq C_\gamma m\}.
 \]
\end{Def}
\begin{Def}[Quantization]
For a symbol $H\in S(m)$ and $t\in[0,1]$, the \emph{quantization} of $H$ is defined on the Schwartz space $\mathcal S(\R^{n})$ of rapidly decreasing functions as the operator
\[
 \Big(\tu{Op}_{ h,t}(H) \Psi\Big)(x):= (2\pi h)^{-n}\int\limits_{\R^{2n}} H\left(tx+(1-t)y,\xi\right)\Psi(y)e^{\frac{i}{ h}(x-y)\xi} dyd\xi.
\]
Two particular important special cases are the standard quantization for $t=1$ 
\[
 \Big(\tu{Op}_{ h}(H) \Psi\Big)(x):= (2\pi h)^{-n}\int\limits_{\R^{2n}} H\left(x,\xi\right)\Psi(y)e^{\frac{i}{ h}(x-y)\xi} dyd\xi
\]
and the Weyl quantization for $t=\tfrac{1}{2}$
\[
 \Big(\tu{Op}_{ h}^w(H) \Psi\Big)(x):= (2\pi h)^{-n}\int\limits_{\R^{2n}} H\left(\frac{x+y}{2},\xi\right)\Psi(y)e^{\frac{i}{ h}(x-y)\xi} dyd\xi.
\]
\end{Def}
The Weyl quantization has the particular advantage that real symbols give rise to $L^2$-symmetric operators. We will therefore mainly use the Weyl quantization, and in order to shorten the notation, we will denote the Weyl quantization of a symbol by $\hat H:=\tu{Op}_ h^w(H)$. In general, such a quantized pseudodifferential operator will, however, not have purely discrete spectrum. In order to obtain a Hamilton operator with satisfactory spectral properties we suppose
\begin{hyp}\label{hyp:regularityH}
 $H:\R^{2n}\to \R$ is a smooth function bounded from below such that $\langle H\rangle$ is an order function and $H\in S(\langle H\rangle)$.
\end{hyp}
From these regularity conditions it follows with \cite[Théorème 2.6]{HR83} that $\hat H$ extends to a unique self-adjoint operator on the space of square integrable functions $L^2(\R^n)$. However, such an operator will not have discrete spectrum in an interval $[E_1,E_2]$ in general, and therefore we assume
\begin {hyp}\label{hyp:closed}
 There are constants $E_1<E_2$ and $\varepsilon >0$ such that $H^{-1}([E_1-\varepsilon, E_2+\varepsilon])\subset \R^{2n}$ is compact.
\end {hyp}
The physical interpretation of this condition is that the system is closed, and from \cite[Proposition 5.1]{HR83} we obtain that there is an $ h_0>0$ such that for $ h\in ]0, h_0]$ the spectrum of $\hat H$ in the interval $[E_1,E_2]$ is purely discrete. 

\subsection{Group actions and symmetry reduction}\label{sec:group_action}
We first recall some general facts about compact group actions. Let $G$ be a compact Lie group acting on a smooth manifold $M$. For $g\in G$ and $m\in M$ we will simply denote the group action by $gm$. Given a point $m \in M$, we define its stabilizer subgroup as $G_m:=\{g\in G,~gm=m\}\subset G$. The orbit of $m$ under the group action will be denoted by $Gm:=\{gm, g\in G\}\subset M$, and it is always a smooth manifold diffeomorphic to the homogeneous space $G/G_z$. Furthermore, all the stabilizer subgroups $G_{m'}$ for arbitrary points on the orbit $m'\in Gm$ are conjugate to each other. In general, for points on different orbits this is not the case, which leads to 
\begin{Def}[Orbit types]
Let $m, m'\in M$ be two points. Their orbits $G/G_m$ and $G/G_{m'}$ are said to be \emph{of the same type} if the stabilizer subgroups $G_m$ and $G_{m'}$ are conjugate in $G$. If two orbits $G/H_1$ and $G/H_2$ ($H_1$ and $H_2$ being closed subgroups of $G$) are given, then one can define a \emph{partial order} on the orbit types by defining
\begin{equation}\label{eq:OrbitOrder}
 \textup{type}(G/H_1)\leq \textup{type}(G/H_2) \Leftrightarrow H_2 \textup{ is conjugate to a subgroup of } H_1.
\end{equation}
\end{Def}
One of the most important results in the theory of compact group actions is the existence of a principal orbit type.
\begin{prop}\label{prop:principalOrbit}
 Let $G$ be a compact group acting smoothly on a manifold $M$. Then there exists a principal orbit type $G/H$ such that the union of all points of this orbit type $M(H):=\{m\in M,~G_m \tu{ is conjugate to } H\}$ is open and dense in $M$. Furthermore, this orbit type is maximal with respect to the order relation (\ref{eq:OrbitOrder}).
\end{prop}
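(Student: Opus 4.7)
The plan is to combine the slice theorem with an induction on dimension. Since $G$ is compact, averaging an arbitrary Riemannian metric against Haar measure produces a $G$-invariant Riemannian metric on $M$. The slice theorem then yields, for each $m \in M$, a $G_m$-invariant open neighborhood $S$ of $0$ in the normal space $N_m := (T_m(Gm))^\perp \subset T_mM$ together with a $G$-equivariant diffeomorphism $\varphi: G \times_{G_m} S \to U_m$ onto a $G$-invariant tubular neighborhood $U_m$ of the orbit $Gm$, sending $[e,0]$ to $m$. Under $\varphi$, the stabilizer of $[g,v]$ in $G$ is $g(G_m)_v g^{-1}$, where $G_m$ acts linearly on $N_m$. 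Thus the local study of orbit types near $Gm$ reduces to that of the linear $G_m$-action on the slice $N_m$.

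I would then treat the linear case: for any compact Lie group $K$ acting linearly on a finite-dimensional real vector space $V$, there exist a closed subgroup $H \subseteq K$, maximal in the order (\ref{eq:OrbitOrder}), and a $K$-invariant open dense subset $V^\circ \subseteq V$ on which every stabilizer $K_v$ is conjugate to $H$. The proof is by induction on the lexicographic pair $(\dim K, \dim V)$. Since the orbit dimension is lower semicontinuous, $v \mapsto \dim K_v$ is upper semicontinuous, so the set where it attains its minimum is open and nonempty; pick $v_0$ there. Applying the slice theorem to the linear action at $v_0$ models a $K$-neighborhood of $Kv_0$ on $K \times_{K_{v_0}} S_{v_0}$. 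If $\dim Kv_0 > 0$, then $\dim S_{v_0} < \dim V$ and induction applies to the $K_{v_0}$-action on $S_{v_0}$; if $Kv_0$ is a single point but $K$ acts nontrivially, then $K_{v_0} \subsetneq K$ and induction applies with a strictly smaller group; if $K$ acts trivially, one takes $H = K$.

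To globalize, I would consider the set
\[
M^\ast := \bigl\{ m \in M :\ (\dim G_m,\ \#(G_m/G_m^\circ)) \text{ is lexicographically minimal on } M \bigr\},
\]
which is open by upper semicontinuity of both components. Combining the slice theorem with the linear case shows that near every $m \in M^\ast$ the stabilizer along an open dense subset must be conjugate to $G_m$ itself, for any strictly smaller subgroup would lower one of the two integers in the pair. Hence the orbit type is locally constant on $M^\ast$; assuming $M$ connected, this produces a unique principal type $G/H$ with $M(H) \supseteq M^\ast$ open, and density of $M(H)$ follows because the slice theorem plus the linear case provides an open dense subset of every neighborhood on which that same type is realized. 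Maximality of $G/H$ in (\ref{eq:OrbitOrder}) is then immediate from the construction, since every other stabilizer occurring in $M$ contains a conjugate of $H$. The main obstacle is the induction in the linear step, where one must verify that the slice at a generic $v_0$ strictly decreases either $\dim V$ or $\dim K$; a secondary subtlety is the tacit use of connectedness of $M$ to pass from local constancy on $M^\ast$ to a single global type.
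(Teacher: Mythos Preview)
The paper does not give its own proof of this proposition; it simply cites Bredon's textbook (Theorem IV.3.1). So there is no approach to compare against, and your outline is in fact close to the standard textbook argument via the slice theorem.

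That said, there is a genuine gap in your linear induction step. You claim that if $Kv_0$ is a single point but $K$ acts nontrivially on $V$, then $K_{v_0}\subsetneq K$. This is false: the orbit $Kv_0$ being a single point is \emph{equivalent} to $K_{v_0}=K$. More broadly, when $\dim Kv_0=0$ the slice $S_{v_0}$ has the same dimension as $V$, and $K_{v_0}$ has the same dimension as $K$ (since $K_{v_0}\supseteq K^\circ$), so neither coordinate in your lexicographic pair $(\dim K,\dim V)$ strictly decreases and the induction stalls. The correct observation in this case is that since $v_0$ was chosen with minimal stabilizer dimension, \emph{every} $v\in V$ satisfies $\dim K_v=\dim K$, hence $K^\circ$ acts trivially on all of $V$ and the action factors through the finite group $K/K^\circ$; one then handles the finite-group case directly (or enlarges the induction parameter to include $|K/K^\circ|$). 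You essentially flag this yourself as ``the main obstacle,'' but the argument you wrote for it does not work as stated.

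Your secondary caveat about connectedness of $M$ is also well taken: without it, local constancy of the orbit type on $M^\ast$ does not give a single global principal type, and the statement as written in the paper tacitly relies on this.
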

\begin{proof}
 See e.g. \cite[Theorem IV.3.1]{Bre72}.
\end{proof}

Symmetry groups of a Hamilton system are always required to preserve the symplectic structure. For a Hamilton system on $\R^{2n}$ a linear group $G$ preserving the symplectic structure $\omega$ (i.e.~fulfilling $\omega(g\alpha,g\beta)=\omega(\alpha,\beta)$ for all $\alpha,\beta\in \R^{2n}$ and $g\in G$) has thus to be a subgroup of the symplectic group $Sp(n)$.
An important class of symplectic group actions arise from subgroups $G\subset Gl(n,\R)$. For an element $g\in G$ acting linearly on $\R^n$, this action can be lifted to a symplectic action on $\R^{2n}$ by defining $g(x,\xi):=(gx,(g^t)^{-1}\xi)$. In the sequel, we will only consider compact subgroups $G\subset O(n)$, so that the symplectic action simplifies to $g(x,\xi)=(gx,g\xi)$. 
Studying orthogonal subgroups is in fact not more restrictive than studying compact subgroups $G\subset GL(n,\R)$, because by a standard averaging argument each such group $G$ is conjugate to an orthogonal subgroup $S_0 G S_0^{-1}\subset O(n)$ for a suitable $S_0\in Gl(n,\R)$ (see e.g. \cite[Lemma 4.6]{Cas06fg} for more details). Through the smooth path $e^{At}z\subset \R^{2n}$, each element $A$ of the Lie algebra $\g$ of $G$ defines a vector in $T_z\R^{2n}$. Identifying the tangent space with $\R^{2n}$, and 
considering $\g$ as a matrix Lie algebra of $n\times n$ matrices, this vector is exactly given by the matrix action $(Ax,A\xi)$. We will generally denote it by $Az$ and write the vector space spanned by this Lie algebra action as $\g z:=\{Az,~A\in \g\}$. It is straightforward to check that $\g z=T_z(Gz)$. 

A Hamiltonian is said to be $G$-invariant if $H(gz)=H(z)$ for all $g\in G, z\in \R^{2n}$. In this case the flow commutes with the $G$-action. Indeed, under certain assumptions such Hamilton systems can be reduced, i.e.~one can construct a lower dimensional Hamilton system by using constants of motion and dividing out the group action. An important object for this symplectic reduction is the so called momentum map, which encodes the constants of motion. In our setting it is given by
\[
 \mu:\Abb{\R^{2n}}{\g^*}{z}{\mu(z)} \tu{ where }\mu(z)(A):=\langle z,JAz\rangle.
\]
The zero level of the momentum map is given by 
\[
\Omega_0:=\mu^{-1}(0):=\{z\in \R^{2n}, \langle z, JAz\rangle =0~~\forall A\in \g\} .
\]
The regularity of $\Omega_0$ strongly depends on the properties of the $G$-action. If $U\subset \R^{2n}$ is an open $G$-invariant subset intersecting $\Omega_0$ such that all orbits in $\Omega_0\cap U$ are of the same type, then it is known (see e.g.~\cite[Proposition 2]{CR09}) that $\Omega_0\cap U$ is a smooth manifold with tangent space
\[
 T_z(\Omega_0\cap U) = (J\g z)^\perp.
\]
Furthermore, one has the following important
\begin{thm}[Symplectic reduction]\label{thm:SymplRed}
 If $U$ is a set with the properties as above, then $G\backslash(\Omega_0\cap U)$ carries a unique structure of a smooth symplectic manifold and the projection
\[
 Pr:\Abb{\Omega_0\cap U}{G\backslash(\Omega_0\cap U)}{z}{[z]}
\]
which associates to each point $z$ the corresponding equivalence class $[z]$ in the quotient is a smooth submersion. If $H:U\to\R$ is a $G$-invariant Hamiltonian, then it reduces to a smooth Hamiltonian $\tilde H:G\backslash(\Omega_0\cap U)\to \R$, and the Hamilton flow $\tilde \Phi_t$ of $\tilde H$ on $G\backslash(\Omega_0\cap U)$ is compatible with the flow $\Phi_t$ of $H$ on $U$, i.e.
\begin{equation}\label{eq:Phi_red_commutes}
 \tilde \Phi_t\circ Pr=Pr \circ\Phi_t
\end{equation}
\end{thm}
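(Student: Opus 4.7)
The plan is to follow the classical Marsden--Weinstein--Meyer reduction argument, adapted to the single orbit type hypothesis. First, I would establish the smooth structure on $G\backslash(\Omega_0\cap U)$. Since all orbits in $\Omega_0\cap U$ have the same type (conjugate to some $H\subset G$), the $G$-action on $\Omega_0\cap U$ is, in a neighbourhood of each orbit, equivariantly diffeomorphic to the model $G\times_H W$ for a slice $W$ (slice theorem for compact group actions). This local model makes the quotient manifold theorem directly applicable, giving $G\backslash(\Omega_0\cap U)$ a (unique) smooth structure for which $Pr$ is a surjective smooth submersion; uniqueness forces any other candidate to agree via the universal property of quotients.

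Next, I would define a symplectic form $\tilde\omega$ by requiring $Pr^*\tilde\omega=\iota^*\omega$, where $\iota:\Omega_0\cap U\hookrightarrow\R^{2n}$ is the inclusion. To see that this equation determines $\tilde\omega$, I need $\iota^*\omega$ to be basic for the orbit fibration, i.e. $G$-invariant and horizontal. $G$-invariance is immediate since $G$ acts by symplectomorphisms. Horizontality amounts to $\omega(Az,v)=0$ for every $A\in\g$ and $v\in T_z(\Omega_0\cap U)=(J\g z)^\perp$, which holds because, using $A^tJ=-JA$ for $A$ in the Lie algebra of $O(n)$ (extended to $\R^{2n}$), $\omega(Az,v)=\langle Az,Jv\rangle=-\langle v,JAz\rangle=0$. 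Closedness of $\tilde\omega$ follows from that of $\omega$ together with the fact that $Pr$ is a submersion. For non-degeneracy, the kernel of $\iota^*\omega$ at $z$ equals $T_z(\Omega_0\cap U)\cap T_z(\Omega_0\cap U)^\omega$; using the already stated identification $T_z(\Omega_0\cap U)=(J\g z)^\perp=(\g z)^\omega$, the symplectic double complement collapses to $\g z$, so the kernel is exactly $\g z=\ker dPr(z)$, and $\tilde\omega$ descends to a non-degenerate form downstairs. (The inclusion $\g z\subset T_z(\Omega_0\cap U)$ itself follows from the identity $\omega(Az,Bz)=\pm\mu(z)([A,B])$, which vanishes on $\Omega_0$.)

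For the final statement, $G$-invariance of $H$ together with the submersion property of $Pr$ yields a unique smooth $\tilde H$ on $G\backslash(\Omega_0\cap U)$ with $\tilde H\circ Pr=H|_{\Omega_0\cap U}$. To obtain the flow compatibility, I would first check that $X_H$ is tangent to $\Omega_0$: for $A\in\g$ the Poisson bracket satisfies $\{H,\mu_A\}=dH(Az)=0$, so each $\mu_A$ is a conserved quantity and $\Phi_t$ preserves $\Omega_0$. Since $X_H$ is $G$-equivariant, $dPr\cdot X_H$ defines a vector field $\tilde Y$ on the quotient, and for any $\tilde v=dPr(v)$ the computation $\tilde\omega(\tilde Y,\tilde v)=\omega(X_H,v)=dH(v)=d\tilde H(\tilde v)$ identifies $\tilde Y$ with $X_{\tilde H}$. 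Uniqueness of integral curves then gives $\tilde\Phi_t\circ Pr=Pr\circ\Phi_t$. I expect the principal difficulty to be the first step, namely handing the quotient manifold theorem rigorously from the single-orbit-type assumption via the slice theorem; the remaining symplectic/dynamical content is essentially a dimension count and two applications of the defining relation $dH=\omega(X_H,\cdot)$.
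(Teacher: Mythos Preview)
Your sketch is correct and follows the standard Marsden--Weinstein--Meyer argument; the individual steps (slice theorem for the smooth quotient, basicness of $\iota^*\omega$, the kernel computation via $(\g z)^{\omega\omega}=\g z$, and the push-down of $X_H$) are all sound in this setting. Note, however, that the paper does not actually supply a proof of this theorem: it simply refers to \cite[Theorem 8.1.1, p.~302]{OR04}. Thus there is no ``paper's own proof'' to compare against beyond the cited textbook result, and your proposal is essentially a self-contained outline of what that reference proves in greater generality.
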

\begin{proof}
 see \cite[Theorem 8.1.1, p. 302]{OR04}
\end{proof}
 An important example for such an $G$-invariant open subset is the open and dense set of all principal orbits $M(H)$ described by Proposition \ref{prop:principalOrbit}. We will call the intersection of $\Omega_0$ with this set the \emph{regular part} of $\Omega_0$ and write $\tu{Reg }\Omega_0:=\Omega_0\cap M(H)$.

If the condition of equal $G$-orbits on $\Omega_0\cap U$ is not fulfilled, a reduction to a singular stratified space is possible \cite[Theorem 8.1.1, p. 302]{OR04}. The reduction is however much more complicated and uses results on the stratification of $G$-spaces in orbit types.

On the quantum mechanical side, the symmetry reduction appears via the Peter-Weyl theorem. The orthogonal action of $G$ on $\R^n$ induces a unitary representation on $L^2(\R^n)$, which is the space on which we study the quantized Hamilton operator, by
\[
 (M(g)\Psi)(x):=\Psi(g^{-1}x)~~\forall ~\Psi\in L^2(\R^n).
\]
By the Peter-Weyl theorem, this unitary representation leads to the following decomposition of our Hilbert space $L^2(\R^n)$: Let $\chi\in \hat G$ be an equivalence class of unitary irreducible $G$-representations given by an irreducible character of the compact group $G$ and $d_\chi=\chi(Id)$ the dimension of the irreducible representation. Then
\[
 P_\chi:=d\chi\int_G\overline{\chi(g)} M(g)dg ~\tu{ ($dg$ is the normalized Haar measure)}
\]
is a projector, and we denote its image by $L^2_\chi(\R^n)$. The space $L^2(\R^n)$ then decomposes into a direct sum
\[
 L^2(\R^n)=\bigoplus\limits_{\chi\in \hat G}L^2_\chi(\R^n).
\]
As a direct consequence of the $G$-invariance of the Hamiltonian $H$, we get that the quantized Hamilton operator $\hat H$ commutes with the $G$-action $M(g)$ so that
\begin{equation}\label{eq:HamiltonKommutiert}
 \hat H = M(g)\hat H M(g^{-1}).
\end{equation}
Equation (\ref{eq:HamiltonKommutiert}) implies that the Hamilton operator leaves $L^2_\chi(\R^n)$ invariant. We can therefore consider the restriction of $\hat H$ to $L^2_\chi(\R^n)$ and denote it by  $\hat H_\chi$. This restricted operator is the symmetry-reduced Hamilton operator. Note that that in contrast to the symmetry reduction of the classical system, for the quantum reduction there are no additional complications if $G$ acts with different orbit types. 

\section{Weak equivariant Weyl asymptotics}\label{sec:weak}
In this section we will prove a theorem for the asymptotic number of eigenvalues of the reduced Hamilton operator $\hat H_\chi$ for an arbitrary compact orthogonal group action. Suppose that $\hat H_\chi$ has purely discrete spectrum on $]E_1,E_2[$ and let $f\in C^\infty_0(]E_1,E_2[)$. Note that as $\hat H=\textup{Op}^w_ h(H)$ is a family of operators parametrized by $ h>0$, its spectrum will equally depend on the semiclassical parameter $ h$. We will give the leading term in the limit $ h\to0$, plus a remainder estimate, for the expression $\sum f(\lambda_i)$, where the sum is over all eigenvalues of $\hat H_\chi$ in $]E_1,E_2[$ repeated according to their multiplicity. By taking a smooth function approximating the characteristic function of an interval $[a,b]\subset]E_1,E_2[$, this expression immediately gives us approximate information on the asymptotic number of eigenvalues in the interval $[a,b]$. As we are restricted to smooth functions $f$, and cannot take $f$ exactly to be a 
characteristic 
function, we 
call these asymptotics ``weak Weyl asymptotics''. The strong version of the equivariant Weyl's law i.e.~asymptotics for $f$ being a characteristic function, have been obtained by El Houakmi-Helffer \cite{EH91} via FIOs in the case of one orbit type. 

We will prove the ``weak Weyl asymptotics'' for an arbitrary orbit structure by reducing the problem via functional calculus to oscillatory integrals with singular critical sets, and use results of \cite{Ram14} for these integrals. A generalization of the strong version of Weyl's law to arbitrary orbit structures seems not possible by reducing the problem to the integrals studied in \cite{Ram14}, but would require to resolve a much more complicated phase function.

We can now state and prove our first main result
\begin{thm}\label{thm:weakAsym}
Let $G$ be a compact subgroup of $O(n)$ and $H:\R^{2n}\to \R$ a smooth $G$-invariant Hamiltonian which is bounded from below, and satisfies the regularity conditions of Hypothesis \ref{hyp:regularityH}. Let $E_1<E_2$ and $\epsilon>0$ be such that $H^{-1}([E_1-\epsilon,E_2+\epsilon])$ is compact, and let $f:\R\to\R$ be smooth and compactly supported in $]E_1,E_2[$. For a given character $\chi\in \hat G$ and small $ h$, $f(\hat H_\chi)$ is of trace class, and its trace is asymptotically given by
\begin{equation}\label{eq:weakAsym}
 \tu{Tr}(f(\hat H_\chi)) = (2\pi  h)^{-n+\kappa} d_\chi\mathcal L_0 +\mathcal O( h^{-n+\kappa+1}\log( h^{-1})^\Lambda)
\end{equation}
where $\kappa$ is the dimension of a principal orbit in $\R^n$, $\Lambda$ the maximal number of elements of a totally ordered subset of the set of orbit types, and
\begin{equation}\label{eq:Leading_Term_weak_asym}
 \mathcal L_0 = [\pi_{\chi|H_{\mathrm{prin}}}:1]\int\limits_{\tu{Reg } \Omega_0} \frac{f(H(z))}{\tu{Vol}(Gz)}d(\tu{Reg }\Omega_0)(z).
\end{equation}
Here $\tu{Vol}(Gz)$ denotes the induced volume of the principal $G$-orbit through $z$ as a submanifold of $\R^n$, and $d(\tu{Reg }\Omega_0)$ the induced measure on $\tu{Reg }\Omega_0$. $[\pi_{\chi|H_{\mathrm{prin}}}:1]$ is the Frobenius factor which is given by the multiplicity with which the trivial representation occurs in the restriction of the irreducible $G$-representation $\pi_\chi$ to the isotropy subgroup of the principal orbits $H_{\mathrm{prin}}\subset G$.
\end{thm}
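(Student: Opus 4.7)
The plan is to reduce the trace to an oscillatory integral over $G\times\R^{2n}$ whose critical set is singular precisely because the $G$-action has several orbit types, and then invoke the stationary phase theorem of \cite{Ram14} for such integrals.

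First, by Hypothesis \ref{hyp:closed} and \cite[Proposition 5.1]{HR83}, the spectrum of $\hat H$ (and hence of $\hat H_\chi$) is purely discrete in $[E_1,E_2]$ for small $h$, so $f(\hat H_\chi)$ is of finite rank on the energy window, and in particular trace class. Since $\hat H$ commutes with every $M(g)$, so does $P_\chi$, and therefore
\[
\tu{Tr}(f(\hat H_\chi))=\tu{Tr}(P_\chi f(\hat H))=d_\chi\int_G\overline{\chi(g)}\,\tu{Tr}\bigl(M(g)f(\hat H)\bigr)\,dg.
\]
I would then construct $f(\hat H)$ as an $h$-PDO via Helffer--Sjöstrand functional calculus: since $f\in C^\infty_0(]E_1,E_2[)$, $f(\hat H)$ has a Weyl symbol of the form $f\circ H+\sum_{j\geq 1}h^j a_j$ with each $a_j$ supported in the compact set $H^{-1}(\tu{supp}\,f)$. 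Taking the pointwise trace of the Weyl kernel of $M(g)f(\hat H)$ then gives, to leading order in $h$,
\[
\tu{Tr}(M(g)f(\hat H))=(2\pi h)^{-n}\int_{\R^{2n}} f\!\left(H\!\left(\tfrac{gy+y}{2},\xi\right)\right)e^{\frac{i}{h}\langle(g-\mathds 1)y,\xi\rangle}\,dy\,d\xi+O(h^{1-n}).
\]

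Reinserting this expression and exchanging integrations yields an oscillatory integral over $G\times\R^{2n}$ with amplitude $\overline{\chi(g)}\,f(H(\ldots))$ and phase $\Phi(g,y,\xi)=\langle(g-\mathds 1)y,\xi\rangle$. Differentiating in $\xi$, $y$, and $g$ shows that its critical set is $\mathcal C=\{(g,z):g\in G_z\}$, which fibers over the momentum level set $\Omega_0$ but fails to be smooth exactly where the orbit type of $z$ jumps. The central step is then to apply the stationary phase theorem of \cite{Ram14} for such singular critical sets: based on a partial resolution of singularities guided by the orbit-type stratification, it produces a leading contribution of order $h^{-n+\kappa}$ from the principal stratum plus a remainder of the announced size $O(h^{-n+\kappa+1}\log(h^{-1})^\Lambda)$, where $\Lambda$ is controlled by the maximal chain length in the orbit-type poset. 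A transverse Hessian computation along $\tu{Reg}\,\Omega_0$ shows that the non-degenerate part of $\Phi$ behaves like a standard symplectic pairing, integration along the fibres $G/H_{\mathrm{prin}}$ produces the factor $1/\tu{Vol}(Gz)$, and the remaining integral over the stabilizer combined with $\overline{\chi(g)}$ and the prefactor $d_\chi$ yields the Frobenius multiplicity $[\pi_{\chi|H_{\mathrm{prin}}}:1]$ by Schur orthogonality; assembled together these give precisely the coefficient $\mathcal L_0$ in \eqref{eq:Leading_Term_weak_asym}.

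The main obstacle will be fitting the present integral into the hypotheses of \cite{Ram14}: one has to verify the required transverse non-degeneracy of $\Phi$ on each stratum, carry out the change of variables that replaces integration over $G$ by integration along the orbit times the stabilizer, and combinatorially identify the exponent $\Lambda$ in the remainder with the length of the longest orbit-type chain. A secondary but routine point is that the correction terms $h^j a_j$ from the functional calculus produce oscillatory integrals of the same shape with an additional factor $h^j$; by the same stationary phase argument they contribute only to the remainder.
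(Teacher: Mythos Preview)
Your proposal is correct and follows essentially the same route as the paper: reduce via functional calculus and the projector $P_\chi$ to an oscillatory integral over $G\times\R^{2n}$ with phase $\langle(\mathds 1-g)x,\xi\rangle$, then invoke the singular stationary phase result of \cite{Ram14}. The only noteworthy difference is that the paper first passes from Weyl to standard quantization (via a short lemma ensuring the resulting symbols stay $G$-invariant and compactly supported), so that the amplitude is independent of $g$ and fits the hypotheses of \cite[Theorem~9.3]{Ram14} verbatim; your midpoint amplitude $f\!\bigl(H(\tfrac{gy+y}{2},\xi)\bigr)$ would require either this same quantization change or an extra Taylor argument before \cite{Ram14} applies, and the Hessian/fibre computation you sketch is handled in the paper by citing \cite[Lemma~7]{CR09}.
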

\begin{rem}
 As mentioned in Section \ref{sec:group_action}, the case of a compact symmetry group $G\subset GL(n,\R)$ can be reduced to the case of an orthogonal subgroup, treated in the theorem, by a standard averaging argument. 
\end{rem}
\begin{rem}
 As $G$ acts with only one orbit type $G/H_{\mathrm{prin}}$ on $\tu{Reg }\Omega_0$, the quotient $G\backslash\textup{Reg }\Omega_0 $ has the structure of a symplectic manifold (see Theorem \ref{thm:SymplRed}) and $\tu{Reg }\Omega_0$ is a fiber bundle over this quotient
 \[
  Pr:\Abb{\tu{Reg }\Omega_0}{G\backslash\textup{Reg }\Omega_0}{z}{[z]}
 \]
 with the principal orbit $G/H_{\mathrm{prin}}$ as the fiber. By its $G$-invariance, the Hamiltonian $H(z)$ takes  constant values along the fibers equal to the value of the reduced Hamiltonian $\tilde H([z])$. Consequently, using a fiberwise integration we can rewrite (\ref{eq:Leading_Term_weak_asym}) as
 \begin{equation}
  \mathcal L_0 = [\pi_{\chi|H_{\mathrm{prin}}}:1]\int\limits_{G\backslash \tu{Reg } \Omega_0} f(\tilde H([z]))d(G\backslash \tu{Reg }\Omega_0)([z]).
 \end{equation}

\end{rem}
\begin{proof}
In a first step we show that $f(\hat H_\chi)$ is trace class, and express the trace as an oscillatory integral. Afterwards we will derive the asymptotic behavior for this integral from a result in \cite{Ram14}.

From the compactness of $H^{-1}(E_1-\epsilon,E_2+\epsilon)$ we conclude that $\hat H$ has purely discrete spectrum on the interval $[E_1,E_2]$ with eigenvalues $\lambda_1,\dots,\lambda_r$ and eigenstates $\Psi_1,\dots\Psi_r$ \cite[Proposition 5.1]{HR83}. The expression $f(\hat H)$ is thus defined by
\[
 f(\hat H)\Psi = \sum\limits_{i=1}^rf(\lambda_i) \langle \Psi_i,\Psi\rangle \Psi_i, \tu{ for }\Psi\in L^2(\R^n).
\]
As $H(z)$ is $G$-invariant, $\hat H$ commutes with the left regular representation of $G$ and the eigenstate $\Psi_i$ can be chosen such that each $\Psi_i$ belongs precisely to one $G$-representation $L^2_\chi$. Consequently, if we denote by $\sigma_\chi$ the set of all eigenvalues belonging to $\chi\in \hat G$ we can write
\[
 f(\hat H)\Psi=\sum\limits_{\chi\in \hat G}\sum\limits_{\lambda_i\in \sigma_\chi} f(\lambda_i)\langle \Psi_i,\Psi\rangle \Psi_i=\left(\sum\limits_{\chi\in \hat G}f(\hat H_\chi)\right)\Psi
\]
with 
\[
f(\hat H_\chi)\Psi=\sum\limits_{\lambda_i\in \sigma_\chi} f(\lambda_i)\langle \Psi_i,\Psi\rangle \Psi_i. 
\]
Thus we conclude
\begin{equation}\label{eq:f(Hchi)}
 f(\hat H_\chi)=f(\hat H)\hat P_\chi.
\end{equation}
Furthermore $f(\hat H_\chi)$ is of finite rank and thus trace class. 
Now, for $f(\hat H)$ we can use the Helffer-Robert functional calculus \cite[Théorème 4.1]{HR83} and get for each $N\in \N$ the expression
\begin{equation}\label{eq:FunctCalc}
 f(\hat H)= \sum\limits_{j=0}^N  h^j Op_ h^w(a_j) +  h^{N+1} \hat R_{N+1}( h),
\end{equation}
where the $a_j$ are symbols with $\tu{supp}(a_j)\subset H^{-1}(]E_1-\epsilon,E_2+\epsilon[)$ and $a_0=f\circ H$. Furthermore, from the proof of Théorème 5.1 in \cite{HR83} it follows that $\hat R_N$ is of trace class for $ h$ sufficiently small and that its trace norm is bounded by $\|\hat R_{N+1}\|_{\tu{Tr}}\leq C h^{-n}$. 

In order to express $f(\hat H_\chi)$ as an oscillatory integral we first prove the following
\begin{lem}\label{lem:TraceKernel}
 Let $Op_ h(a_j)$ be the standard quantization. Then $Op_ h(a_j)\hat P_\chi$ can be written as an integral operator with smooth kernel
 \[
  K(x,y) = (2\pi h)^{-n}d_\chi\int\limits_G\int\limits_{\R^n} a_j(x,\xi)\overline{\chi(g)}e^{\frac i h (x-g^{-1}y)\xi}  d\xi dg.
 \]
\end{lem}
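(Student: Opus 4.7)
The plan is a direct composition of integral kernels. Because the symbol $a_j$ is supported in the compact set $H^{-1}(]E_1-\epsilon,E_2+\epsilon[)$, its $\xi$-support is compact, so the standard-quantization expression
$$K_0(x,y):=(2\pi  h)^{-n}\int_{\R^n}a_j(x,\xi)e^{\frac{i}{ h}(x-y)\xi}\,d\xi$$
is an absolutely convergent integral and defines a genuine smooth kernel for $\textup{Op}_ h(a_j)$, bypassing any oscillatory-integral subtleties at the outset.

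Next I would apply $\hat P_\chi$ on the right by inserting its explicit group-integral form $\hat P_\chi\Psi(y)=d_\chi\int_G\overline{\chi(g)}\Psi(g^{-1}y)\,dg$ into the integral defining $\textup{Op}_ h(a_j)\Psi(x)$. Testing first against $\Psi\in\mathcal{S}(\R^n)$, the triple integrand is dominated by an integrable function (thanks to the compact $\xi$-support of $a_j$, the compactness of $G$, and the Schwartz decay of $\Psi$), so Fubini's theorem applies and the $g$-, $y$-, and $\xi$-integrations may be interchanged freely. The key algebraic step is then the change of variables $y\mapsto gy$ in the $y$-integral: since $G\subset O(n)$ this map has unit Jacobian and preserves Lebesgue measure. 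Combining this with the invariance of the normalized Haar measure on $G$ under $g\mapsto g^{-1}$ brings the phase into the asserted form $(x-g^{-1}y)\xi$ and identifies the kernel.

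Smoothness of $K(x,y)$ on $\R^n\times\R^n$ is then automatic: the $(g,\xi)$-domain of integration is compact and the integrand is jointly smooth in all variables, so $\partial_x^\alpha\partial_y^\beta$ may be taken under the integral sign to any order, each derivative producing a bounded integrand with compact $(g,\xi)$-support. The main ``obstacle'' is really only careful bookkeeping of the variable substitutions so as to reproduce exactly the expression stated; analytically, the compact support of $a_j$ and the compactness of $G$ render every step elementary, and no nontrivial estimates or stationary-phase reasoning are required here.
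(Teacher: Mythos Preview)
Your proposal is correct and follows essentially the same route as the paper: both arguments rest on Fubini (justified by the compact support of $a_j$ and of $G$) together with the orthogonal change of variables $y\mapsto gy$. The only cosmetic difference is direction---the paper starts from the claimed kernel $K(x,y)$ and verifies $\int K(x,y)\Psi(y)\,dy=[\textup{Op}_h(a_j)P_\chi\Psi](x)$, whereas you compose the operators and extract the kernel---and you additionally spell out the smoothness argument, which the paper leaves implicit.
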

\begin{proof}
For $\Psi\in C_0^\infty(\R^n)$, we calculate with Fubini
\begin{eqnarray*}
   \int\limits_{\R^n} K(x,y)\Psi(y)dy &=&  (2\pi h)^{-n}d_\chi \int\limits_{\R^n}\left[\int\limits_G\int\limits_{\R^n} a_j(x,\xi)\overline{\chi(g)}e^{\frac i h (x-g^{-1}y)\xi}  d\xi dg\right]\Psi(y) dy\\
   &=&(2\pi h)^{-n} \int\limits_{\R^n}\int\limits_{\R^n} a_j(x,\xi)e^{\frac i h (x-y)\xi}  \left[\int\limits_Gd_\chi\overline{\chi(g)} \Psi(gy)dg\right] d\xi dy\\
   &=&[Op_ h(a_j) P_\chi \Psi] (x),
\end{eqnarray*}
where we used the fact that all integrands are compactly supported.
\end{proof}
In order to calculate the desired trace by the above lemma we need to change the quantization in (\ref{eq:FunctCalc}).
\begin{lem}\label{lem:ChangeQuant} 
 If $b\in S(m)$ is a $G$-invariant, compactly supported symbol, then
 \begin{equation}\label{eq:change_quantization}
   Op_ h^w(b)=Op_ h\left(\sum\limits_{j=0}^N  h^j b_j(x,\xi) + h^{N+1}R_N(x,\xi)\right)   
 \end{equation}

 for every $N\in \N$, where $b_j\in S(m)$ are compactly supported and $G$-invariant. Furthermore, $b_0=b$ and $\|Op_ h(R_N)\|_{Tr}<C h^{-n}$.
\end{lem}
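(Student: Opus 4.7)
The plan is to use the standard exact transition formula between Weyl and standard quantization, then Taylor-expand in $h$ and verify the three required properties (compact support, $G$-invariance, trace-norm estimate).

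First I would recall that for any $b\in S(m)$ the Weyl and standard quantizations of the same operator are related by
\[
 \tu{Op}_ h^w(b) = \tu{Op}_ h(\tilde b),\qquad \tilde b(x,\xi) = e^{-\frac{i h}{2}\langle D_x,D_\xi\rangle}\,b(x,\xi),
\]
where $\langle D_x,D_\xi\rangle = \sum_{k=1}^n \partial_{x_k}\partial_{\xi_k}$. This is the standard change-of-quantization formula (see e.g.\ \cite[Ch.~4]{Zw12}). Taylor-expanding the exponential to order $N$ yields
\[
 \tilde b(x,\xi) = \sum_{j=0}^{N}  h^j\, b_j(x,\xi) +  h^{N+1} R_N(x,\xi), \quad b_j := \frac{(-i/2)^j}{j!}\langle D_x,D_\xi\rangle^j b,
\]
with an integral remainder $R_N$ of the form $c_N\int_0^1(1-s)^N e^{-\frac{i h s}{2}\langle D_x,D_\xi\rangle}\langle D_x,D_\xi\rangle^{N+1}b\,ds$. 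In particular $b_0 = b$.

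Next I would verify the two structural properties. Compact support of each $b_j$ and of $R_N$ is immediate, because $\langle D_x,D_\xi\rangle^j b$ is a linear combination of partial derivatives of $b$, and the operator $e^{-\frac{i h s}{2}\langle D_x,D_\xi\rangle}$ preserves the Schwartz class; alternatively one can argue directly that $\tilde b - \sum_{j\leq N} h^j b_j$ equals a finite sum of $ h$-dependent derivatives of $b$, hence is supported in $\supp b$. For $G$-invariance, since $G\subset O(n)$ acts by $g(x,\xi)=(gx,g\xi)$ and $\langle D_x,D_\xi\rangle=\sum_k \partial_{x_k}\partial_{\xi_k}$ is the trace of $D_x\otimes D_\xi$, orthogonality $g^t g = \mathds 1$ implies $\langle D_x,D_\xi\rangle$ commutes with the pull-back $b\mapsto b\circ g$. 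Hence if $b$ is $G$-invariant so are all $b_j$ and $R_N$.

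The remaining and most delicate point is the trace-norm bound $\|\tu{Op}_ h(R_N)\|_{\tu{Tr}}\leq C h^{-n}$. Here I would invoke the general trace-norm estimate for $ h$-pseudodifferential operators with compactly supported symbol: if $a\in\CC(\R^{2n})$ with seminorms bounded uniformly in $ h$, then $\|\tu{Op}_ h(a)\|_{\tu{Tr}}\leq C h^{-n}\sum_{|\alpha|\leq M}\|\partial^\alpha a\|_{L^1}$ for some $M$ depending only on $n$ (cf.\ \cite[Théorème 2.6 and proof of Théorème 4.1 and 5.1]{HR83} or \cite[Ch.~9]{Zw12}). Since $R_N$ is (up to a $1$-parameter average of unitary operators $e^{-\frac{i h s}{2}\langle D_x,D_\xi\rangle}$ acting on a compactly supported function of fixed $L^1$-Sobolev norm) a fixed compactly supported symbol with $ h$-uniformly bounded seminorms, this trace-norm bound applies and yields the claim.

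The main obstacle is really only bookkeeping: making sure that the derivatives $\langle D_x,D_\xi\rangle^{N+1}b$ obtained after applying $e^{-i h s\langle D_x,D_\xi\rangle/2}$ still define a compactly supported symbol with $ h$-uniform seminorm bounds, so that the Helffer--Robert trace-class estimate can be applied directly. Everything else (the expansion, the $G$-invariance, the support property) follows from elementary properties of $\langle D_x,D_\xi\rangle$ and orthogonality of the $G$-action.
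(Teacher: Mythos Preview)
Your approach is essentially the same as the paper's: both invoke the standard change-of-quantization formula $\tilde b = e^{-\frac{i h}{2}\langle D_x,D_\xi\rangle}b$, read off $b_j$ as iterated $\langle D_x,D_\xi\rangle$-derivatives of $b$ (whence compact support), and use orthogonality of $G$ to show that $\langle D_x,D_\xi\rangle$ commutes with the pull-back $b\mapsto b\circ g$ (whence $G$-invariance).

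There is one genuine gap, however. You assert that $R_N$ is compactly supported, and your trace-norm argument explicitly relies on this (``$a\in\CC(\R^{2n})$'', ``a fixed compactly supported symbol''). But the operator $e^{-\frac{i h s}{2}\langle D_x,D_\xi\rangle}$ is \emph{nonlocal}: it does not preserve compact support, only Schwartz decay. Your alternative argument that $\tilde b - \sum_{j\leq N} h^j b_j$ is ``a finite sum of $ h$-dependent derivatives of $b$'' is not correct either; the remainder genuinely contains the nonlocal piece $e^{-\frac{i h s}{2}\langle D_x,D_\xi\rangle}\langle D_x,D_\xi\rangle^{N+1}b$. The lemma itself does not claim $R_N$ is compactly supported (only the $b_j$), so this is harmless for the statement, but it breaks your trace-norm step as written.

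The fix is the one the paper uses: since $b$ is compactly supported it lies in $S(\langle z\rangle^{-k})$ for every $k$, and the change-of-quantization map together with its Taylor remainder preserve these symbol classes uniformly in $ h\in]0,1]$. Hence $R_N\in S(\langle z\rangle^{-k})$ for all $k$, i.e.\ $R_N$ is Schwartz with $ h$-uniform seminorms, and the standard trace estimate $\|\tu{Op}_ h(a)\|_{\tu{Tr}}\lesssim  h^{-n}\sum_{|\alpha|\leq M}\|\partial^\alpha a\|_{L^1}$ (which only needs $L^1$ control, not compact support) gives $\|\tu{Op}_ h(R_N)\|_{\tu{Tr}}\leq C h^{-n}$.
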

\begin{proof}
 By \cite[Theorem 4.13]{Zw12} one has (\ref{eq:change_quantization}) with $b_j=\frac{(-i h \langle D_x,D_\xi\rangle)^j}{j!}b$, so that the compactness property follows immediately. From the $G$-invariance of $b$ we conclude for arbitrary $g\in G$
 \[
 \begin{array}{rcl}
  \left(\frac{(-i h \langle D_x,D_\xi\rangle)^j}{j!}b \right)(x)&=&\left(\frac{(-i h \langle D_x,D_\xi\rangle)^j}{j!}(b\circ g)\right)(x)\\
  &=&\left(\frac{(-i h \langle gD_x,gD_\xi\rangle)^j}{j!}b\right)(gx)\\
  &=&\left(\frac{(-i h \langle D_x,D_\xi\rangle)^j}{j!}b\right)(gx)\\
  \end{array}
 \]
which proves the $G$-invariance of the $b_j$. Finally, as $b$ is compactly supported, $b$ and consequently also $R_N$ belong to $S(\langle z\rangle^{-k})$ for all $k$. Thus, $R_N$ is a Schwartz function and $\tu{Tr}[Op_ h(R_N)]=(2\pi h)^{-n}\iint R_N(x,\xi)dxd\xi$.
\end{proof}
Equation (\ref{eq:f(Hchi)}) and (\ref{eq:FunctCalc}), together with Lemma \ref{lem:TraceKernel} and Lemma \ref{lem:ChangeQuant} finally yield
\begin{cor}
 For each $N\in \N$ the trace of $f(\hat H_\chi)$ can be written as
 \begin{equation}\label{eq:WeakAsymOscInt}
  \tu{Tr}(f(\hat H_\chi)) = (2\pi h)^{-n} d_\chi \sum\limits_{j=0}^N h^j \int\limits_G \int\limits_{\R^n} \int\limits_{\R^n} \tilde a_j(x,\xi)\overline{\chi(g)}e^{\frac i h (x-gx)\xi}  dxd\xi dg +\mathcal O( h^{-n+N}),
 \end{equation}
 where the $\tilde a_j$ are compactly supported, $G$-invariant symbols and $\tilde a_0=f\circ H$.
\end{cor}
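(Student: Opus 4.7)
The plan is to assemble the three ingredients already set up: the identity $f(\hat H_\chi) = f(\hat H)\hat P_\chi$ from (\ref{eq:f(Hchi)}), the Helffer--Robert expansion (\ref{eq:FunctCalc}), and Lemmas \ref{lem:TraceKernel} and \ref{lem:ChangeQuant}. First I would start from (\ref{eq:f(Hchi)}), substitute (\ref{eq:FunctCalc}) and multiply by $\hat P_\chi$, obtaining
\[
 f(\hat H_\chi) = \sum_{j=0}^{N} h^j\, \tu{Op}_h^w(a_j)\,\hat P_\chi \,+\, h^{N+1} \hat R_{N+1}(h)\,\hat P_\chi,
\]
so the remainder is trace-class with $\|\hat R_{N+1}\hat P_\chi\|_{\tu{Tr}}\le \|\hat R_{N+1}\|_{\tu{Tr}}\|\hat P_\chi\|\le C h^{-n}$, contributing an $\mathcal O(h^{-n+N+1})$ error.

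Next I would observe that each symbol $a_j$ is $G$-invariant. This is the one point that must be justified carefully: since $H$ is $G$-invariant and the $G$-action on $\R^n$ is orthogonal, $\hat H$ commutes with $M(g)$, hence so does $f(\hat H)$; by the uniqueness of the asymptotic Weyl symbol, each coefficient $a_j$ is $G$-invariant, and by construction it is compactly supported in $H^{-1}(]E_1-\varepsilon, E_2+\varepsilon[)$. Having this, I would apply Lemma \ref{lem:ChangeQuant} to every $\tu{Op}_h^w(a_j)$, expanding it to order $h^{N-j}$ and collecting terms of the same total $h$-power into new compactly supported $G$-invariant symbols $\tilde a_0,\dots,\tilde a_N$. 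Because $b_0=b$ in Lemma \ref{lem:ChangeQuant}, the leading coefficient is $\tilde a_0 = a_0 = f\circ H$. The aggregated tail is $h^{N+1}$ times a sum of $\tu{Op}_h(R)\hat P_\chi$ terms, each of trace norm $\mathcal O(h^{-n})$, giving an overall $\mathcal O(h^{-n+N+1})\subset \mathcal O(h^{-n+N})$ correction.

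Finally, I would apply Lemma \ref{lem:TraceKernel} to each $\tu{Op}_h(\tilde a_j)\hat P_\chi$. The lemma provides a smooth, compactly-supported-in-$x$ integral kernel $K_j(x,y)$, so the operator is trace class with trace $\int_{\R^n} K_j(x,x)\,dx$. Setting $y=x$ in the kernel formula of Lemma \ref{lem:TraceKernel} yields an exponent $(x-g^{-1}x)\xi$; a substitution $g\mapsto g^{-1}$ (using unimodularity of $G$ and the conjugation rule $\overline{\chi(g^{-1})}=\chi(g)$), combined with the $G$-invariance of $\tilde a_j$ and the orthogonality of $g$, rewrites the expression in the form stated in (\ref{eq:WeakAsymOscInt}). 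Summing in $j$ and collecting all error terms gives the claim.

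The step I expect to require the most care is the $G$-invariance of the Helffer--Robert coefficients $a_j$ (and hence of the $\tilde a_j$): without this the subsequent reduction to the equivariant oscillatory integral studied in \cite{Ram14} cannot be carried out. Once invariance is secured, the rest is essentially a bookkeeping exercise combining Lemmas \ref{lem:TraceKernel} and \ref{lem:ChangeQuant} with the trace norm estimate of the remainder in the Helffer--Robert calculus.
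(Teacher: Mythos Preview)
Your approach is exactly the one the paper intends: the paper's proof is the single sentence ``Equation~(\ref{eq:f(Hchi)}) and~(\ref{eq:FunctCalc}), together with Lemma~\ref{lem:TraceKernel} and Lemma~\ref{lem:ChangeQuant} finally yield [the corollary]'', and you have correctly unpacked it, including the point the paper leaves implicit, namely that the Helffer--Robert coefficients $a_j$ are $G$-invariant (so that Lemma~\ref{lem:ChangeQuant} applies).

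One small caution on your last manipulation. Starting from the diagonal of the kernel in Lemma~\ref{lem:TraceKernel} and substituting $g\mapsto g^{-1}$ turns the exponent into $(x-gx)\xi$ but leaves $\chi(g)$, not $\overline{\chi(g)}$; the further changes of variable you invoke ($G$-invariance of $\tilde a_j$, orthogonality of $g$) do not convert $\chi(g)$ back to $\overline{\chi(g)}$. The clean way around this is to use cyclicity of the trace and compute instead the kernel of $P_\chi\,\tu{Op}_h(\tilde a_j)$: one finds directly $K(x,y)=(2\pi h)^{-n}d_\chi\int_G\int \tilde a_j(g^{-1}x,\xi)\overline{\chi(g)}e^{\frac i h(g^{-1}x-y)\xi}\,d\xi\,dg$, and after setting $y=x$, integrating over $x$, and substituting $x\mapsto gx$ (here the $G$-invariance of $\tilde a_j$ is used), you get precisely the integrand $\tilde a_j(x,\xi)\overline{\chi(g)}e^{\frac i h(x-gx)\xi}$ of~(\ref{eq:WeakAsymOscInt}). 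Equivalently, a direct re-derivation of Lemma~\ref{lem:TraceKernel} shows the kernel should carry $e^{\frac i h(x-gy)\xi}$ rather than $e^{\frac i h(x-g^{-1}y)\xi}$, after which the diagonal gives~(\ref{eq:WeakAsymOscInt}) with no further substitution. This is a bookkeeping detail and does not affect the validity of your overall argument.
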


Each of the summands in (\ref{eq:WeakAsymOscInt}) is thus an oscillatory integral with phase function $\Phi(x,\xi)=(x-gx)\xi$. The critical set of such an oscillatory integral is defined by $\mathcal C:=\{(z,g)\in \R^{2n}\times G,\,d\Phi_{z,g}=0\}$. For this phase function a straightforward computation yields
\[
 \mathcal C =\{(z, g)\in \Omega_0\times G,~gz=z\}. 
\]

 If $G$ acts with different orbit types, this set is not smooth and one cannot obtain the asymptotics directly by the stationary phase theorem. However, it has been shown \cite{Ram14} that by iteratively blowing up the critical set, the leading order term for integrals of this type can be obtained together with a remainder estimate. As all symbols of the oscillatory integral are compactly supported we can use \cite[Theorem 9.3]{Ram14} and obtain in the limit $ h \to 0$
\[
 \tu{Tr}(f(\hat H_\chi)) = (2\pi  h)^{-n+\kappa} d_\chi\mathcal L_0 +\mathcal O( h^{-n+\kappa+1}\log( h^{-1})^\Lambda),
\]
where $\kappa$ is the dimension of a principal orbit in $\R^n$, $\Lambda$ the maximal number of elements of a totally ordered subset of the set of orbit types, and
\[
 \mathcal L_0 = \int\limits_{\tu{Reg } \mathcal C} \frac{\overline{\chi(g)}f(H(x,\xi))}{\left|\det\left( \tu{Hess }\Phi(x,\xi,g)_{|N_{x,\xi,g}\tu{Reg }\mathcal C}\right)\right|^{1/2}}d(\tu{Reg }\mathcal C)(x,\xi,g),
\]
where the set $\tu{Reg }\mathcal C:=\mathcal C\cap(\tu{Reg }\Omega_0\times G)$ is a smooth submanifold of $\R^{2n}\times G$ and $\tu{Hess }\Phi(x,\xi,g)_{|N_{(x,\xi,g)}\tu{Reg }\mathcal C}$ the restriction of the Hessian to the normal-bundle of $\tu{Reg }\mathcal C$.
Applying \cite[Lemma 7]{CR09} this expression can be simplified to
\[
 \mathcal L_0 = [\pi_{\chi|H_{\mathrm{prin}}}:1]\int\limits_{\tu{Reg } \Omega_0} \frac{f(H(z))}{\tu{Vol}(Gz)}d(\tu{Reg }\Omega_0)(z).
\]
This finally proves Theorem \ref{thm:weakAsym}.
\end{proof}
\section{Equivariant Gutzwiller formula}
While the previous section treated a result on the number of eigenvalues of the symmetry-reduced operator $\hat H_\chi$, this section will be dedicated to the correlations in the spectrum of $\hat H_\chi$ which will be described by an equivariant Gutzwiller trace formula. In Section \ref{sec:SpecDist} we will introduce the equivariant spectral distribution which is the well known quantity to study asymptotic spectral correlations. This spectral distribution will be written as an oscillatory integral using results of Cassanas \cite{Cas06}. The phase analysis and a discussion of the possible singularities of the critical set will be the subject of Section \ref{sec:CritSet}. We will see that, in general, the singularities can result from the group action as well as from the Hamilton dynamics, and we will review the assumptions which were imposed by Guillemin-Uribe \cite{GU90} and Cassanas \cite{Cas06} in order to avoid these singularities. In Section \ref{sec:DynaAssumptions} we introduce a generalization of 
the 
former assumptions which makes no special hypothesis on the group action anymore. Under these assumptions we finally prove the equivariant Gutzwiller trace formula in Section \ref{sec:ProofGutzwiller}. 

\subsection{Spectral distribution and oscillatory integrals}\label{sec:SpecDist}
Throughout this section we will assume that $H$ is a Hamiltonian fulfilling Hypothesis \ref{hyp:regularityH} and \ref{hyp:closed} and that $G\subset O(n)$ is a compact Lie group acting on $\R^n$. Furthermore we assume that $H$ is $G$-invariant. As discussed above, for an arbitrary equivalence class of irreducible unitary representation $\chi\in \hat G$ one can study the symmetry-reduced Hamilton operator $\hat H_\chi$ which has purely discrete spectrum in $[E_1,E_2]$ for $ h\in ]0, h_0]$. Let $\zeta\in C_0^\infty([E_1,E_2])$ be a smooth cut-off function and $ h \in]0, h_0]$. We can then define for each $E\in ]E_1,E_2[$ the spectral distribution $\rho^\chi_{E, h}\in \mathcal S'(\R)$ by its action on $f\in \mathcal S(\R)$ by setting
\begin{equation}\label{eq:SpecDist}
 \rho^\chi_{E, h}(f):= \tu{Tr}\left(\zeta(\hat H_\chi)f\left(\frac{E-\hat H_\chi}{ h}\right)\right). 
\end{equation}
This expression can be reformulated using the Fourier inversion formula, and one obtains
\begin{equation}\label{eq:SpecDistFT}
 \rho^\chi_{E, h}(f):= \frac{1}{2\pi} \tu{Tr}\left( \zeta(\hat H_\chi)\int\limits_\R e^{\frac{i}{ h} E t}e^{-\frac{i}{ h} \hat H_\chi t}\hat f(t) dt\right)
\end{equation}
with $\hat f(t)$ being the Fourier transform $\hat f(t):=\int\limits_\R e^{-iEt}f(E)dE$. 

In order to understand the significance of this spectral distribution, consider the following heuristics: Assume, that $\hat f(t)$ is compactly supported and that this support is very close to $t_0\neq 0$ such that it approximates the delta distribution at $t_0$. One then obtains
\[
 \rho^\chi_{E, h}(f)\approx \frac{1}{2\pi} e^{\frac{i}{ h} E t_0}\sum\limits_{E_j\in \sigma(\hat H_\chi)\cap[E_1,E_2]} \zeta(E_j)e^{-\frac{i}{ h}E_j t_0}
\]
where $\sigma(\hat H_\chi)$ denotes the spectrum of $\hat H_\chi$. From the Weyl law we know that the number of eigenvalues $E_j$ of $\hat H_\chi$ in $[E_1,E_2]$ diverges as $ h\to0$. Therefore, if they are randomly distributed in $[E_1,E_2]$, the spectral distribution will be very small for small $ h$ due to phase cancellation. The contrary  extreme would be a totally correlated spectrum, respectively an equidistant spectrum
\[
E_{j+1}=E_j+\frac{2\pi h}{t_0} 
\]
In this case the spectral distribution gives 
\[
 \rho^\chi_{E, h}(f)\approx \frac{1}{2\pi} e^{\frac{i}{ h} (E-E_0) t_0}\sum\limits_{E_j\in \sigma(\hat H_\chi)\cap[E_1,E_2]} \zeta(E_j),
\]
which constitutes for small $ h$ a large quantity according to the weak asymptotics. Of course such an equidistant spectrum is only possible for very special dimensions of phase space and $G$-orbits in order to be in accordance with the equivariant Weyl law, and also the approximation of the delta distribution by  $\hat f$ strictly speaking does not hold anymore in the limit $ h\to 0$. This heuristic discussion however shows that contributions of the spectral distribution at $t_0\neq 0$ measure the correlation of the spectrum with a correlation length $\tfrac{2\pi  h}{t_0}$, motivating its study. The equivariant Gutzwiller trace formula will be an asymptotic expansion of the spectral distribution $\rho^\chi_{E, h}$ evaluated at a Schwartz function $f$ with $\hat f$ compactly supported away from zero.

The standard way to obtain the asymptotic expansion of the spectral distribution is to rewrite it as an oscillatory integral, and apply the stationary phase approximation. This transformation of the spectral distribution to an oscillatory integral can be performed in different ways. The most common way is to use Fourier integral operator theory in order to study the operator $e^{\frac{i}{ h} \hat H t}$ in (\ref{eq:SpecDistFT}). An alternative approach has been proposed by Combescure-Ralston-Robert \cite{CRR99} using their previous results on the propagation of coherent states \cite{CR97}. This coherent state approach has been elaborated in detail by Cassanas \cite{Cas06fg, Cas06} in exactly the same setting which we study in this work, i.e.~for symmetry-reduced $ h$-PDOs. We will thus use the following result of those works:
\begin{prop}\label{prop:OscIntGutzwiller}
Let $f\in \mathcal S(\R)$ such that $\hat f$ is compactly supported. Then
 \begin{equation}\label{eq:OscIntegralEquiv} 
  \rho_{E, h}^{\chi}(f)=d_\chi \int_\R dt\int_{\R^{2n}}dz\int_{G}dg~~a_g(z, h)\hat f(t)e^{\frac{i}{ h}\phi_E(z,t,g)}, 
 \end{equation}
where the complex phase function is given by $\phi_E=\phi_E^1+i\phi_E^2$ with
\begin{eqnarray*}
 \phi_E^1(z,t,g)&=&(E-H(z))t+\frac{1}{2}\langle g^{-1} z,Jz\rangle-\frac{1}{2}\int_0^t\langle(z_t-g^{-1}z),J\dot{z}_s\rangle ds,\\  
 \phi_E^2(z,t,g)&=&\frac{1}{4}\langle(\mathds{1}-\hat W_t)(gz_t-z),(gz_t-z)\rangle.
\end{eqnarray*}
Here $\hat W_t$ is a complex valued $2n\times2n$-matrix such that $(\mathds{1}-\hat W_t)$ defines a non-degenerate quadratic form. More precisely, $\hat W_t:=\left(\begin{array}{cc} W_t&-iW_t\\-iW_t&-W_t\end{array}\right)$ where the $n\times n$ matrice $W_t$ is given by the equation 
\[
\tfrac{1}{2}(\mathds{1}+W_t):=\left(\mathds{1}-ig(C+iD)(A+iB)^{-1}g^{-1}\right)^{-1}
\]
with $A,B,C,D$ being the real $t$- and $z$-dependent  $n\times n$ matrices such that the linearized flow is given by $(\Phi_t)_{*,z}=\left(\begin{array}{cc} A&B\\C&D\end{array}\right)$. The integrand $a_g(\bullet,  h):\R^{2n}\to \C$ is supported in $H^{-1}([E-\delta E,E+\delta E])$ and is given by an asymptotic expansion in $ h$ with leading term
\begin{equation}\label{eq:SymbolLeadingTerm}
 a_g(z, h)\sim_{ h\to 0^+}\frac{(2\pi h)^{-d}}{2\pi}\overline{\chi(g)}\chi_2(z)\zeta(H(z))\tu{det}_*^{-1/2}\left(\frac{A+iB-i(C+iD)}{2}\right).
\end{equation}
Here $\chi_2$ is a smooth cut-off function compactly supported around $\Sigma_E$ and equal to one in a neighborhood of $\Sigma_E=H^{-1}(E)$ and $\tu{det}_*^{-1/2}(M)$ is defined as the product of the reciprocal square roots of eigenvalues of the matrix $M$ with real parts chosen to be positive. 
\end{prop}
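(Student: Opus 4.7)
The plan is to follow the coherent states approach of Combescure-Ralston-Robert and Cassanas. Starting from (\ref{eq:SpecDistFT}) and using that $\hat H$ commutes with $M(g)$ for all $g\in G$, the functional calculus commutes with the Peter-Weyl projector, so
\[
\zeta(\hat H_\chi)f\!\left(\tfrac{E-\hat H_\chi}{h}\right) = \zeta(\hat H)f\!\left(\tfrac{E-\hat H}{h}\right)\hat P_\chi.
\]
Inserting the definition $\hat P_\chi = d_\chi\int_G \overline{\chi(g)}M(g)\,dg$ and using the Fourier inversion formula transforms the spectral distribution into
\[
\rho^\chi_{E,h}(f) = \frac{d_\chi}{2\pi}\int_\R\!\int_G \hat f(t)\,\overline{\chi(g)}\,e^{\frac{i}{h}Et}\,\tu{Tr}\!\left(\zeta(\hat H)\,e^{-\frac{i}{h}\hat H t}\,M(g)\right)dg\,dt,
\]
so the task is to express the remaining trace as an oscillatory phase space integral.

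For Gaussian coherent states $\varphi_z\in \mathcal S(\R^n)$ of width $\sqrt{h}$ centered at $z\in\R^{2n}$, the standard resolution of the identity gives $\tu{Tr}(A)=(2\pi h)^{-n}\!\int_{\R^{2n}}\langle \varphi_z, A\varphi_z\rangle\,dz$. Since $G\subset O(n)$, the unitary $M(g)$ sends $\varphi_z$ to $\varphi_{gz}$ up to a phase that I would absorb into the symbol. The decisive input is the Combescure-Robert propagation formula \cite{CR97}: for $z$ in the compact energy shell selected by $\zeta\circ H$, the propagated state $e^{-\frac{i}{h}\hat H t}\varphi_z$ coincides modulo $\mathcal O(h^\infty)$ with a squeezed Gaussian centered at $z_t=\Phi_t(z)$, whose quadratic phase is governed by the linearized flow blocks $(A,B,C,D)=(\Phi_t)_{*,z}$ and whose global phase carries the classical action $\int_0^t\xi_s\,dx_s - H(z)t$ along the trajectory. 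Substituting this parametrix, and replacing $\zeta(\hat H)$ by its Weyl symbol $\zeta\circ H$ modulo subprincipal corrections, reduces the trace to an inner product of two squeezed Gaussians, one centered at $gz$ and one at $z_t$, evaluated in a Fresnel integral over $z$.

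The exponent of this Gaussian overlap splits into a real part, given by the positive-definite damping $\tfrac14\langle(\mathds 1-\hat W_t)(gz_t-z),(gz_t-z)\rangle$ on the squared distance between the centers, which produces $\phi_E^2$; and an imaginary part assembling the classical action together with the symplectic correction $\tfrac12\langle g^{-1}z,Jz\rangle$ coming from the difference of reference points, which produces $\phi_E^1$ after combining with the Fourier factor $Et$. The explicit form $\tfrac12(\mathds 1+W_t)=(\mathds 1-ig(C+iD)(A+iB)^{-1}g^{-1})^{-1}$ arises from inverting the sum of the complex covariance matrices of the source and propagated squeezed Gaussians, while the overall Gaussian normalization contributes the determinant $\tu{det}_*^{-1/2}((A+iB-i(C+iD))/2)$ in the leading symbol. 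Lower-order terms of the expansion $a_g(z,h)$ come from the subprincipal corrections in the Combescure-Robert expansion and from the symbol of $\zeta(\hat H)$; the cutoffs $\chi_2$ and $\zeta(H(z))$ localize $z$ near $\Sigma_E$.

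The main obstacle is the verification that $(\mathds 1-\hat W_t)$ defines a non-degenerate quadratic form with non-negative real part, so that $\phi_E^2\ge 0$ with zeros precisely at $gz_t=z$ and the Fresnel integrals converge uniformly for $t$ in the compact support of $\hat f$; this rests on the symplectic structure of the linearized flow and the positivity of the squeezing matrix. The second obstacle is the uniform $\mathcal O(h^\infty)$ control of the remainders in trace norm, which uses Calderón-Vaillancourt type estimates on the propagated states and the energy localization by $\zeta$. Since both of these are carried out in full detail in \cite{Cas06fg, Cas06} in exactly the equivariant setting studied here, I would invoke these references for the precise algebra of $\hat W_t$ and the symbol expansion rather than reproducing the lengthy Gaussian computations.
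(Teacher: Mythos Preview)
Your proposal is correct and follows exactly the same approach as the paper: the paper's own proof is simply a citation to \cite{Cas06}, Section~4, Equation~(4.1), together with a reference to \cite{Cas06fg} for the non-degeneracy of $\mathds{1}-\hat W_t$. You have additionally sketched the coherent-states mechanism behind Cassanas's computation, which is accurate and more informative than the bare citation, but ultimately both you and the paper defer to \cite{Cas06fg, Cas06} for the detailed Gaussian algebra and remainder estimates.
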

\begin{proof}
See \cite{Cas06}, Section 4, Equation (4.1). For the non degeneracy of $\mathds{1}-\hat W_t$ see the discussion in \cite{Cas06fg} on page 10 (Section IV A). 
\end{proof}
By this proposition we have thus written the spectral distribution as an oscillatory integral with complex phase function. The next section will be dedicated to its phase analysis.
\subsection{The critical set}\label{sec:CritSet}
For a complex valued phase function $\Phi\in C^\infty(\R^n)$ with $\tu{Im}(\Phi)\geq0$, the critical set is defined as (c.f. \cite{HoeI} Theorem 7.7.1)
\[
 \mathcal C_\Phi :=\{x\in \R^n,~\tu{Im}(\Phi(x))=0 \tu{ and } \nabla\Phi(x)=0\}.
\]
Using the fact that $(\mathds{1}-\hat W_t)$ is non-degenerate, a straightforward but slightly tedious calculation leads to
\begin{prop}\label{prop:CritSetGutzwiller}
 For the complex phase function $\phi_E$ of Proposition \ref{prop:OscIntGutzwiller}, the critical set is given by
\begin{equation}\label{eq:CritSetEquivariant}
 \mathcal C_{\phi_E}=\{(z,t,g)\in \R^{2n}\times\R\times G,~~z\in\Omega_0\cap\Sigma_E,~g\Phi_t(z)=z\}
\end{equation}
\end{prop}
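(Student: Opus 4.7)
The plan is to split the critical-point conditions into the imaginary-part condition $\phi_E^2 = 0$ and the real gradient condition $\nabla \phi_E^1 = 0$, and to analyse each separately on the common solution set.

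For the imaginary part, since $\mathds{1}-\hat W_t$ defines a non-degenerate complex quadratic form (as recorded in Proposition \ref{prop:OscIntGutzwiller}), the non-negative quantity $\phi_E^2(z,t,g)=\tfrac{1}{4}\langle(\mathds{1}-\hat W_t)(gz_t-z),gz_t-z\rangle$ vanishes precisely when $gz_t - z = 0$, i.e.~$g\Phi_t(z)=z$. On this zero set $\phi_E^2$ attains its minimum, so in addition $\nabla\phi_E^2 = 0$ there, and the critical set reduces to $\{g\Phi_t(z)=z\}\cap\{\nabla\phi_E^1 = 0\}$. It therefore suffices to analyse $\nabla\phi_E^1$ under the standing assumption $gz_t=z$.

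For the real part I would compute the three partial derivatives $\partial_g,\partial_t,\partial_z$. The $g$-derivative is the cleanest: parametrising a one-parameter subgroup by $g\mapsto ge^{sA}$ with $A\in\g$ and using $\tfrac{d}{ds}\big|_{s=0}(ge^{sA})^{-1} = -Ag^{-1}$, a short computation together with the identity $\int_0^t \dot z_s\,ds = z_t - z$ yields
\[
\partial_g \phi_E^1|_{\tu{crit}}(A) = -\tfrac{1}{2}\langle Az_t, Jz_t\rangle = -\tfrac{1}{2}\mu(z_t)(A).
\]
Hence this derivative vanishes in every direction $A$ iff $\mu(z_t)=0$, i.e.~$z_t\in\Omega_0$; and since $\Omega_0$ is $G$-invariant while $\mu$ is conserved along the Hamilton flow (Noether's theorem applied to the $G$-invariant Hamiltonian $H$), this is equivalent to $z\in\Omega_0$.

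The $t$- and $z$-derivatives I would treat together, using three structural identities: (i) $G$-invariance of $H$, giving $\nabla H(gz)=g\nabla H(z)$; (ii) the fact that $g\in O(n)$, acting block-diagonally on $\R^{2n}$, commutes with $J$, so that on $\{gz_t=z\}$ one has $\dot z_t = g^{-1}\dot z$; and (iii) the symplectic identity $(\Phi_t)_*^T J(\Phi_t)_*=J$. Using these to rewrite $\nabla\phi_E^1$, the joint vanishing of $\partial_t\phi_E^1$ and $\partial_z\phi_E^1$ on $\{gz_t=z\}\cap\{z\in\Omega_0\}$ reduces, after algebraic cancellations, to the single scalar condition $H(z)=E$. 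The converse inclusion is then a direct substitution using the same identities. The main technical obstacle is the $\partial_z$-computation: a priori it produces a term involving the linearised flow $(\Phi_t)_*$ acting on the twist vector $z-z_t=(\mathds{1}-g)z_t$, and only after exploiting $(\Phi_t)_*^T J = J(\Phi_t)_*^{-1}$ together with Hamilton's equation $\dot z_s = J\nabla H(z_s)$ does this term combine with the $\partial_t$-remainder to give a multiple of $E-H(z)$. Everything else is the bookkeeping the author describes as ``slightly tedious''.
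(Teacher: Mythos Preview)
Your overall strategy is the right one and mirrors the computation in \cite{Cas06} (to which the paper simply defers without giving its own argument): extract $g\Phi_t(z)=z$ from the imaginary-part condition, read off $z\in\Omega_0$ from $\partial_g\phi_E^1$, and obtain $z\in\Sigma_E$ from the remaining derivatives. Your $\partial_g$ computation leading to the moment-map condition is correct, and the outline for $\partial_t,\partial_z$ is accurate.

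There is, however, a genuine gap in the first step. You call $\phi_E^2$ ``non-negative'' and conclude that it vanishes only when $gz_t=z$ by non-degeneracy, but $\hat W_t$ is a \emph{complex} matrix, so $\phi_E^2$ is complex-valued; the non-negative quantity required by the critical-set definition is $\operatorname{Im}\phi_E=\operatorname{Re}\phi_E^2$ (since $\phi_E^1$ is real). More seriously, non-degeneracy of a quadratic form does \emph{not} exclude nonzero isotropic vectors---think of $x^2-y^2$, which is non-degenerate yet vanishes on $x=\pm y$. What actually forces $gz_t-z=0$ is either a genuine positivity property of $\operatorname{Im}\phi_E$ as a quadratic form in $gz_t-z$ (this comes from the specific structure of $\hat W_t$, built out of the symplectic linearised flow; it is precisely the content of the discussion in \cite[Section~IV~A]{Cas06fg} that the paper points to), or else one must use the full complex gradient condition $\nabla\phi_E=0$ in conjunction with the invertibility of $\mathds 1-\hat W_t$ as a linear map, rather than $\operatorname{Im}\phi_E=0$ alone. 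Either route works, but ``non-degenerate quadratic form $\Rightarrow$ only zero at the origin'' is not a valid inference, and without fixing this your reduction to the set $\{gz_t=z\}$ is unjustified.
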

\begin{proof}
 See \cite[Proposition 4.1]{Cas06}.
\end{proof}
This critical set $ \mathcal C_{\phi_E}$ is in general a singular set. There are two different sources for these singularities: First of all, they can purely result from the Hamilton flow even if there is no symmetry group acting. For example, an isolated fixed point or a singular family of periodic orbits would lead to such singularities. A second source of singularities are, as in the case of the weak asymptotics, different orbit types of the group action. The same critical set with the same sources of singularities also appears in the work of Guillemin-Uribe \cite{GU90}. In order to make sure that the critical set is smooth, and to prove the asymptotic expansion for the equivariant Gutzwiller trace formula, both works \cite{GU90, Cas06} impose the following three conditions \ref{hyp:Reduction}-\ref{hyp:NondegOrbits} which we now recall, using the notation of \cite{Cas06}.
\begin{customhyp}{(H1)}\label{hyp:Reduction}
 For $E\in \R$, the \emph{hypothesis of reduction} is fulfilled if there exists some $\delta E>0$ such that for $U=H^{-1}(]E-\delta E,E+\delta E[)$
 \begin{itemize}
  \item $\Omega_0\cap U\neq \emptyset$, 
  \item all $G$-orbits in  $\Omega_0\cap U$ are of the same type.
 \end{itemize}
\end{customhyp}
This hypothesis prevents the possible singularities coming from the group action. If it is fulfilled one can consider the reduced Hamiltonian $\tilde H:G\backslash(\Omega_0\cap U)\to\R$ given by Theorem \ref{thm:SymplRed}. In order to avoid the singularities coming from the Hamilton dynamics, one further needs the following two Hypothesis.
\begin{customhyp}{(H2)}\label{hyp:non_critical} 
Assume that \ref{hyp:Reduction} is fulfilled. Then $E$ is a \emph{non critical value} of $\tilde H$, i.e.\,$d\tilde H\neq 0$ on $\tilde \Sigma_E:=\tilde H^{-1}(E)$.
\end{customhyp}
\begin{customhyp}{(H3)}\label{hyp:NondegOrbits}
 Assume that \ref{hyp:Reduction} and \ref{hyp:non_critical} are fulfilled. Let $\gamma\subset \tilde \Sigma_E$ be an arbitrary closed orbit of the reduced Hamiltonian $\tilde H$, $[z]\in \gamma$ an arbitrary point on this orbit, and $T_\gamma>0$ the period of the orbit so that $\tilde \Phi_{T_\gamma}([z])=[z]$. Let furthermore $\tilde F_t:=\partial_z\tilde \Phi_t$ be the linearized flow. Then  we assume that $\gamma$ is a \emph{non-degenerate orbit} i.e.~the eigenvalue 1 of $\tilde F_{T_\gamma}([z])$ has algebraic multiplicity 2.
\end{customhyp}
More generally, the non-degenerate orbit condition can be weakened replacing it by the \emph{clean-flow condition}. However, under this assumption, the resulting spectral asymptotics can not be written as a sum over periodic orbits anymore, and are much less explicit. 

In the following we shall drop the Hypothesis \ref{hyp:Reduction} and study the Gutzwiller trace formula for general compact group actions. As the formulation of \ref{hyp:non_critical} and \ref{hyp:NondegOrbits} is usually based on \ref{hyp:Reduction}, these two conditions have to be replaced by appropriate generalizations which will be formulated in the next subsection.

\subsection{The $G$-non-stationary and $G$-non-degenerate assumption}\label{sec:DynaAssumptions}
Let \ref{hyp:Reduction} be fulfilled. Then assumption \ref{hyp:non_critical} is equivalent to the fact that the Hamilton flow of $\tilde H$ on the reduced phase space has no stationary points at the given energy level $E$. Suppose that $[z]\in \tilde \Sigma_E$ is such a stationary point, i.e.\,$\tilde\Phi_{t}([z])=[z]$ for all $t$, then we infer from (\ref{eq:Phi_red_commutes}), that $ \Phi_t(z)\subset Pr^{-1}([z])=Gz\subset \Omega_0\cap\Sigma_E$ for all $t$. In particular we have $\frac{d}{dt}\Phi_t(z)_{|t=0} \in \g z$. If, on the other hand, we have for some $z\in \Omega_0\cap\Sigma_E$ that $\frac{d}{dt}\Phi_t(z)_{|t=0}= A z$ for some $A\in \g$, then we obtain that $\Phi_t(z)=e^{At}z$ because we check that it fulfills the Hamilton equations of motions (\ref{eq:HamiltonEquations})
\begin{eqnarray*}
 \frac{d}{dt}e^{At}z_{|t=t_0} &=& e^{At_0}Az\\
 &=&e^{At_0}\frac{d}{dt}\Phi_t(z)_{|t=0}\\
 &=&e^{At_0}J\nabla H(z)\\
 &=&J\nabla H(e^{At_0}z).
\end{eqnarray*}
Here we used in the last equality that the group action is symplectic and that the Hamiltonian is $G$-invariant. Consequently a stationary point on the symmetry-reduced system in $\tilde \Sigma_E$ is equivalent to an orbit of the Hamilton flow of $H$ which is tangent to a $G$-orbit in at least one point in $\Omega_0\cap\Sigma_E$. Let now \ref{hyp:Reduction} be dropped. Hypothesis \ref{hyp:non_critical} can then be easily reformulated for general compact group actions and we arrive at
\begin{customhyp}{(H2$'$)}\label{hyp:H2prime}
  The given  Hamilton dynamics is called \emph{$G$-non-stationary} for a given energy $E\in \R$ if and only if $J\nabla H(z)\nsubseteq \g z$ for all $z\in \Sigma_E\cap \Omega_0$.
\end{customhyp}

The generalization of \ref{hyp:NondegOrbits} requires a little bit more work. We assume that the Hamilton dynamics is $G$-non-stationary, so that in particular $\nabla H(z)\neq 0$ on $\Omega_0\cap\Sigma_E$ and $\Sigma_E$ is a smooth submanifold in some neighborhood of $\Omega_0$. From the $G$-invariance of the Hamiltonian we obtain that the $G$-orbit $Gz$ of a point $z\in \Omega_0\cap\Sigma_E$ is completely contained in $\Sigma_E$ thus the same inclusion holds for their tangent spaces $\g z\subset T_z\Sigma_E$. Consequently $\nabla H(z)\perp Az$ for all $z\in\Omega_0\cap\Sigma_E$ and $A\in \g$. For $z\in \Omega_0\cap\Sigma_E$, we obtain thus the decomposition
\begin{equation}\label{eq:DecompTangentSpaceEquiv}
T_z\R^{2n} =(\R \nabla H(z)\oplus J\g z)\operp(\R J\nabla H(z)\oplus\g z)\operp \mathcal R,
\end{equation}
where $\mathcal R$ is simply the orthogonal complement of $(\R \nabla H(z)\oplus J\g z)\operp(\R J\nabla H(z)\oplus\g z)$ in $T_z\R^{2n}$. In order to obtain a clearer interpretation of (\ref{eq:DecompTangentSpaceEquiv}) we introduce a notation which will turn out to be convenient at several other points in the sequel. We define the Lie group $F:=\R\times G$ and its action for $f=(t,g)\in F$ on $z\in \R^{2n}$ by
 \[
 fz:=g\Phi_t(z).
 \]
 The fact that the $G$-action commutes with $\Phi_t$ assures that this is really a Lie group action. We will denote its Lie algebra by $\f=\R\oplus\g$ and an Lie-algebra element by $(\tau, A)$. Its infinitesimal Lie-algebra action is given by $(\tau, A)z= \tau J\nabla H(z)+Az$ and for the subspace of $T_z\R^{2n}$ spanned by the $\f$-action we write
 \[
  \f z:=\{\tau J\nabla H(z)+Az,~(\tau, A)\in \f\}=T_z(Fz).
 \]
With this notation (\ref{eq:DecompTangentSpaceEquiv}) becomes
\begin{equation}\label{eq:DecompTangentSpaceF}
T_z\R^{2n} =J\f z\operp\f z\operp \mathcal R.
\end{equation}

Consider now a point $z\in \Omega_0\cap\Sigma_E$ which fulfills $fz=z$ for some $f\in F$. We can then study the differential $f_{*,z}=(g\Phi_T)_{*,z}:T_z\R^{2n}\to T_z\R^{2n}$.
\begin{lem}\label{lem:GeneralFormDgPhi}
 Assume that \ref{hyp:H2prime} is fulfilled and consider for a fixed $f=(T,g)\in F$ a point $z\in \Omega_0\cap\Sigma_E$ such that $fz=g\Phi_T(z)=z$. Then the differential of $f$ has the form
\begin{equation}\label{eq:GeneralFormDgPhi}
 f_{*,z}=\left(\begin{array}{ccc}
             \begin{array}{|c|} \hline \mathcal A\\ \hline \end{array} &0&0\\
             *&\begin{array}{|c|} \hline \mathcal B\\ \hline \end{array}&*\\
             *&0&\begin{array}{|c|} \hline \mathcal P\\ \hline \end{array}
\end{array}
\right)
\end{equation}
where the block form is with respect to the decomposition (\ref{eq:DecompTangentSpaceF}) of the tangent space, i.e.~$\mathcal A$ acts on $J\f z$, $\mathcal  B$ on $\f z$, and $\mathcal P$ on $\mathcal R$. Furthermore, $\mathcal B$ is given on $\f z=\R J\nabla H(z)\oplus \g z$ by
\[
 \mathcal B=\left(\begin{array}{cc}
          1&0\\
          0&\tu{Ad}(g)
         \end{array}
\right)
\]
\end{lem}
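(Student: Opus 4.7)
The plan is to verify the three structural claims hidden in the block form: (i) the subspace $\f z$ is $f_{*,z}$-invariant, which accounts for the zeros in positions $(1,2)$ and $(3,2)$; (ii) the restriction of $f_{*,z}$ to $\f z$ agrees with the matrix $\mathcal B$ displayed in the statement; (iii) vectors from $\mathcal R$ are mapped into $\f z\operp\mathcal R$, which yields the zero in position $(1,3)$. The blocks $\mathcal A$ and $\mathcal P$ are then simply defined as the remaining entries, with no further content to prove.

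For (i), I observe that since $fz=z$ and $F$ is a group, $Fz=F(fz)=Fz$, so the entire orbit $Fz$ is preserved by $f$. Differentiating at $z$ shows that $f_{*,z}$ preserves $T_z(Fz)=\f z$.

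For (ii), I compute $f_{*,z}$ on the two natural families of generators of $\f z$. For the Hamilton direction $J\nabla H(z)$, I use that the flow is autonomous, so $(\Phi_T)_{*,z}J\nabla H(z)=J\nabla H(\Phi_Tz)=J\nabla H(g^{-1}z)$; the $G$-invariance of $H$ together with $g\in O(n)$ gives $\nabla H(g^{-1}z)=g^{-1}\nabla H(z)$, and the commutation $gJ=Jg$ (which holds because the $G$-action on $\R^{2n}$ is block-diagonal in the $(x,\xi)$ decomposition) collapses the result back to $J\nabla H(z)$, producing the top-left entry $1$ of $\mathcal B$. For $Az$ with $A\in\g$, I use that $\Phi_T$ commutes with the linear $G$-action, so $(\Phi_T)_{*,z}(Az)=A\Phi_T(z)=Ag^{-1}z$; applying $g$ (which, being linear, equals its own differential) yields $gAg^{-1}z=\tu{Ad}(g)(A)\,z$. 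Since the two images already lie in $\R J\nabla H(z)$ and $\g z$ respectively, the off-diagonal entries of $\mathcal B$ vanish automatically and the diagonal is $(1,\tu{Ad}(g))$ as claimed.

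For (iii), I use that $f$ is a symplectomorphism, so $\omega(f_{*,z}v,u)=\omega(v,f_{*,z}^{-1}u)$. Taking $v\in\mathcal R$ and $u\in\f z$, step (i) guarantees that $f_{*,z}^{-1}u\in\f z$; hence $\omega(v,f_{*,z}^{-1}u)=\langle v,Jf_{*,z}^{-1}u\rangle=0$, because $Jf_{*,z}^{-1}u\in J\f z$ and $\mathcal R\perp J\f z$ by the construction of (\ref{eq:DecompTangentSpaceF}). Therefore $\omega(f_{*,z}v,u)=0$ for every $u\in\f z$, i.e.\ $f_{*,z}v\in(\f z)^{\perp_\omega}$. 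The definition $\omega(\cdot,\cdot)=\langle\cdot,J\cdot\rangle$ identifies this symplectic orthogonal with the Euclidean orthogonal of $J\f z$, which by (\ref{eq:DecompTangentSpaceF}) equals $\f z\operp\mathcal R$; hence $f_{*,z}v$ has no $J\f z$-component. I do not expect a genuine obstacle here: the content is bookkeeping, and the most delicate point is (iii), where one has to recognise that the Euclidean orthogonality of $\mathcal R$ against $J\f z$ is precisely what certifies the symplectic orthogonality of $\mathcal R$ against $\f z$; once this is said, symplecticity of $f_{*,z}$ transports the relation forward and forces the $(1,3)$-entry to vanish.
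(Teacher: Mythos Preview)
Your proof is correct and follows essentially the same approach as the paper: invariance of $\f z$ under $f_{*,z}$ gives the zeros in the second column, symplecticity of $f_{*,z}$ forces the symplectic orthogonal $(\f z)^{\perp_\omega}=(J\f z)^\perp=\f z\operp\mathcal R$ to be invariant and hence gives the zero in position $(1,3)$, and the explicit form of $\mathcal B$ is computed on the generators $J\nabla H(z)$ and $Az$ exactly as in the paper. Your step (iii) is simply a more explicit unpacking of the paper's one-line remark that the symplectic complement of an invariant subspace is again invariant under a symplectomorphism.
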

\begin{proof}
 As the $F$-action leaves the corresponding $F$-orbits invariant, $\f z$ is invariant under $f_{*,z}$ which explains the second column in (\ref{eq:GeneralFormDgPhi}). As the Hamilton flow and the $G$ action are both symplectic, the $F$ action is symplectic as well, which implies that the symplectic complement $(J\f z)^\perp=\f z \operp \mathcal R$ of $\f z$ is an invariant subspace under $f_{*,z}$ and consequently we obtain the first row in (\ref{eq:GeneralFormDgPhi}). For the refined form of $\mathcal B$ we consider the vector $J\nabla H(z)=\tfrac{d}{dt}\Phi_t(z)_{|t=0}$. Since
\[
 \frac{d}{dt}g\Phi_T(\Phi_t(z))_{|t=0} = \frac{d}{dt}g\Phi_t(\Phi_T(z))_{|t=0}= \frac{d}{dt}\Phi_t(g\Phi_T(z))_{|t=0} =J\nabla H(z),
\]
$J\nabla H(z)$ is eigenvector of $(g\Phi_T)_{*,z}$ with eigenvalue $1$. On the other hand, if $v\in \g z$ i.e.~$v=\tfrac{d}{dt}e^{At}z_{|t=0}$, one computes
\[
\left(g\Phi_T\right)_{*,z}(v)=\frac{d}{dt}g\Phi_T(e^{At} z)_{|t=0} =\frac{d}{dt}ge^{At}g^{-1} g\Phi_T( z)_{|t=0} =(\tu{Ad}(g)A) z.
\]
\end{proof}
With this general form of the differential $(g\Phi_T)_*$ one can now reformulate \ref{hyp:NondegOrbits} to a new assumption \ref{hyp:GNonDeg} which we call \emph{$G$-non-degenerate assumption}. In order to formulate it we first introduce some further notations.
\begin{Def}
 By a \emph{relative periodic point} $z\in \R^{2n}$ we will denote a point for which there are $T\neq 0$ and $g\in G$ such that $g\Phi_T(z)=z$. For a relative periodic point $z_0$ the set $\gamma=\{g\Phi_t(z_0),~g\in G\textup{ and }t\in \R\}$ is then called \emph{relative periodic orbit}. For a given relative periodic orbit $\gamma$, its \emph{primitive period length} $T_\gamma$ is defined as the smallest $T>0$ such that there exists $g\in G$ with $g\Phi_T(z) =z$ for some $z\in \gamma$. 
\end{Def}
\begin{rem}
 Relative periodic points are thus simply fixed points under some element $f=(T,g)\in F$ with $T\neq 0$ and the relative periodic orbits are the $F$-orbits of a relative periodic point.
\end{rem}
\begin{Def}\label{def:GNonDeg}
 Let $z\in \Omega_0\cap\Sigma_E$ be a relative periodic point with $g\Phi_{T}(z)=z$ for some $g\in G$ and $T>0$. Then this point, as well as the corresponding relative periodic orbit $\gamma$, are called \emph{$G$-non-degenerate} if $\mathcal P$ in (\ref{eq:GeneralFormDgPhi}) has no eigenvalue equal to 1.
\end{Def}

\begin{customhyp}{(H3$'$)}
\label{hyp:GNonDeg}
 We assume that all relative periodic points in $\Omega_0\cap\Sigma_E$ are \emph{$G$-non-degenerate}.  
\end{customhyp}
This condition assumes assumption \ref{hyp:H2prime} but not assumption \ref{hyp:Reduction}. If however the condition of reduction \ref{hyp:Reduction} is fulfilled the following equivalence holds.
\begin{lem}
 If the hypothesis of reduction \ref{hyp:Reduction} is fulfilled, then an orbit is $G$-non-degenerate if and only if this orbit is non-degenerate in the sense of \ref{hyp:NondegOrbits}.
\end{lem}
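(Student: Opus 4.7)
The plan is to extract the eigenvalues of $\tilde F_{T_\gamma}$ from the block structure of $(g\Phi_{T_\gamma})_{*,z}$ provided by Lemma~\ref{lem:GeneralFormDgPhi}. Under \ref{hyp:Reduction}, the projection $Pr:\Omega_0\cap U\to G\backslash(\Omega_0\cap U)$ is a submersion with $\ker Pr_{*,z}=\g z$, and differentiating $\tilde\Phi_t\circ Pr=Pr\circ\Phi_t$ at a relative periodic point with $g\Phi_{T_\gamma}(z)=z$ gives
\[ \tilde F_{T_\gamma}\circ Pr_{*,z}=Pr_{*,z}\circ(g\Phi_{T_\gamma})_{*,z}. \]
Hence $\tilde F_{T_\gamma}$ is realized as the map induced on $T_{[z]}(G\backslash\Omega_0)=T_z\Omega_0/\g z$ by the restriction of $(g\Phi_{T_\gamma})_{*,z}$ to $T_z\Omega_0$.

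To exploit this I would intersect the decomposition $T_z\R^{2n}=J\f z\operp\f z\operp\mathcal R$ with $T_z\Omega_0=(J\g z)^\perp$. Since $J\g z\subset J\f z$ has codimension one, its orthogonal complement in $J\f z$ is a line $\R u$, so
\[ T_z\Omega_0=\R u\operp\f z\operp\mathcal R. \]
Modding out $\g z\subset\f z$ and picking $\R J\nabla H(z)$ as a representative of $\f z/\g z$ yields
\[ T_{[z]}(G\backslash\Omega_0)=\R[u]\oplus\R[J\nabla H(z)]\oplus[\mathcal R], \]
in which $[u]$ is transverse to $T_{[z]}\tilde\Sigma_E$ (because $u\notin T_z\Sigma_E$), $[J\nabla H(z)]$ is the flow direction of $\tilde\Phi_t$, and $[\mathcal R]$ plays the role of the Poincaré transversal of the reduced flow.

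In this adapted basis the block form of Lemma~\ref{lem:GeneralFormDgPhi} descends to
\[ \tilde F_{T_\gamma}=\begin{pmatrix}\lambda_u & 0 & 0\\ * & 1 & *\\ * & 0 & \mathcal P\end{pmatrix}, \]
because $\mathcal B|_{\R J\nabla H(z)}=1$ and $\mathcal B(\g z)\subset\g z$ is killed in the quotient, while $\mathcal A$ must preserve $\R u=T_z\Omega_0\cap J\f z$ (both subspaces being $(g\Phi_{T_\gamma})_{*,z}$-invariant) and so acts there as a scalar $\lambda_u$. The characteristic polynomial therefore factors as $(\lambda-\lambda_u)(\lambda-1)\det(\lambda I-\mathcal P)$, and energy conservation $d\tilde H\circ\tilde F_{T_\gamma}=d\tilde H$, combined with $d\tilde H([u])\neq 0$, forces $\lambda_u=1$. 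Thus the algebraic multiplicity of the eigenvalue $1$ of $\tilde F_{T_\gamma}$ equals $2+m$, where $m$ is its multiplicity in $\mathcal P$, so \ref{hyp:NondegOrbits} (this multiplicity being $2$) is equivalent to $\mathcal P$ having no eigenvalue $1$, i.e.\ to Definition~\ref{def:GNonDeg}.

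The main technical point is establishing this block-lower-triangular form after passage to the quotient—specifically that $\R u=T_z\Omega_0\cap J\f z$ is genuinely one-dimensional and $\mathcal A$-invariant, and that the various $*$-blocks do not spoil the factorization of the characteristic polynomial. Because $\nabla H(z)$ need not lie in $T_z\Omega_0$, the vector $u$ is generally not collinear with $\nabla H(z)$ and must be treated as an auxiliary object, but only its one-dimensionality and transversality to $\tilde\Sigma_E$ are actually used.
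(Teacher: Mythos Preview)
Your argument is sound except for one slip in the justification you flag as ``the main technical point''. The subspace $J\f z$ is \emph{not} $(g\Phi_{T_\gamma})_{*,z}$-invariant---the first column in Lemma~\ref{lem:GeneralFormDgPhi} carries $*$-blocks below $\mathcal A$---so $\R u$ cannot be obtained as the intersection of two invariant subspaces. The conclusion $\mathcal A u\in\R u$ is nevertheless correct: writing $(g\Phi_{T_\gamma})_{*,z}(u)=\mathcal A u+v_2+v_3$ with $v_2\in\f z$, $v_3\in\mathcal R$, both $v_2$ and $v_3$ already lie in $T_z\Omega_0$, and the whole image lies in the invariant subspace $T_z\Omega_0$; hence the $J\f z$-component $\mathcal A u$ must also lie in $T_z\Omega_0\cap J\f z=\R u$. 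With this fix your block form and the factorisation $(\lambda-\lambda_u)(\lambda-1)\det(\lambda I-\mathcal P)$ go through, and the energy-conservation argument for $\lambda_u=1$ is valid (since $\mathcal R\perp\nabla H(z)$ and $J\nabla H(z)\perp\nabla H(z)$, the functional $d\tilde H$ vanishes on those summands).

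The paper takes a somewhat different route. It first writes the standard block form of $(\tilde\Phi_T)_{*,[z]}$ on the reduced space with respect to the decomposition $X_{\tilde H}\operp\nabla\tilde H\operp\mathcal R'$, obtaining the factor $(\lambda-1)^2\det(\lambda I-P)$ directly from the usual Hamiltonian structure downstairs; it then identifies $P$ with $\mathcal P$ by restricting both linearisations to the energy shells $\Omega_0\cap\Sigma_E$ and $\tilde\Sigma_E$ and comparing via $Pr_*$. Your approach avoids the separate restriction to energy shells by carrying the transversal direction $u$ through the quotient and extracting the second eigenvalue $1$ from $d\tilde H\circ\tilde F_{T_\gamma}=d\tilde H$; the paper's approach avoids the auxiliary vector $u$ but needs a Riemannian structure on the reduced space to make sense of $\nabla\tilde H$. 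Both routes arrive at the same identification $P=\mathcal P$.
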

\begin{proof}

 If the condition of reduction is satisfied for $E\in \R$ we obtain a smooth symplectic manifold $\Omega_{\tu{red}} =G \backslash (\Omega_0\cap U)$ with a reduced Hamiltonian $\tilde H([z])=H(z)$ and the reduced Hamilton flow given by $\tilde \Phi_t([z]):=[\Phi_t(z)]$ for $[z]\in \Omega_{\tu{red}}$. The condition that $[z]$ is a periodic point under $\tilde \Phi_t$ is equivalent to the fact that $z$ is a relative periodic point, i.e.~that there is a $g\in G$ and $T> 0$ such that $g\Phi_T(z)=z$.

 As all Hamilton flows preserve the energy, the Hamilton vector field $\tilde H$ of the reduced Hamiltonian is tangential to $\tilde \Sigma_E = \tilde H^{-1}(E)$. We therefore have 
\begin{equation}\label{eq:decomp_tang_red}
 T_{[z]}(\Omega_{\tu{red}})=X_{\tilde H}([z])\operp \nabla \tilde H([z])\operp\mathcal R',~ [z]\in \tilde \Sigma_E.
\end{equation}
As the direction of the flow is always an eigenvector of the linearized flow with eigenvalue 1, we obtain for a relative periodic point $[z]$ the block form
 \begin{equation}\label{eq:differential_red_flow}
  (\tilde \Phi_T)_{*,[z]}=\left(\begin{array}{ccc}
             1&*&*\\
             0&1&0\\
             0&*&\begin{array}{|c|} \hline P\\ \hline \end{array}
\end{array}
\right)
 \end{equation}
with respect to the decomposition (\ref{eq:decomp_tang_red}). This block form immediately implies that the characteristic polynomial of the linearized flow is
\[
 \det\left(\lambda \mathds 1 - (\tilde\Phi_T)_{*,[z]}\right) = (\lambda-1)^2\det(\lambda \mathds 1-P).
\]
The ordinary hypothesis of non-degenerate orbits \ref{hyp:NondegOrbits}, demanding that the algebraic multiplicity of 1 is at most 2, is thus equivalent to the fact that $P$ has no eigenvalue equal to 1.

We finally show, that $P$ corresponds to $\mathcal P$ in (\ref{eq:GeneralFormDgPhi}). In order to see this we first consider the restriction of the reduced flow to the reduced energy shell $\tilde \Sigma_E=\tilde H^{-1}(0)$. As hypothesis \ref{hyp:H2prime} implies that $\nabla \tilde H([z])\neq 0$ for all $[z]\in \tilde \Sigma_E$ the reduced energy shell is again a smooth manifold. Equation (\ref{eq:differential_red_flow}) then reduces to
 \[
  ((\tilde \Phi_T)_{|\tilde \Sigma_E})_{*,[z]}=\left(\begin{array}{cc}
             
             1&*\\
             0&\begin{array}{|c|} \hline P\\ \hline \end{array}
\end{array}
\right)
 \]
 where the decomposition in block form is with respect to the decomposition 
\begin{equation}\label{eq:Decomp_OmegaRed_cap_Sigma}
T_{[z]}(\tilde \Sigma_E)=X_{\tilde H}([z])\operp \mathcal R'.
\end{equation}
As \ref{hyp:GNonDeg} implies \ref{hyp:H2prime} which is under \ref{hyp:Reduction} equivalent to \ref{hyp:non_critical} we conclude from \cite[Lemma 4.5]{Cas06} that $\Omega_0 \cap \Sigma_E$ is again a smooth manifold. We can therefor consider a similar restriction of $g\Phi_T$to $\Omega_0 \cap \Sigma_E$. Using the fact that $T_z(\Omega_0\cap\Sigma_E)= (J\f z)^\perp$ we obtain from Lemma \ref{lem:GeneralFormDgPhi} the expression
 \[
((g\Phi_T)_{|\Omega_0\cap\Sigma_E})_{*,z}=\left(\begin{array}{ccc}
             1 &0&*\\
             0&\begin{array}{|c|} \hline Ad(g)\\ \hline \end{array}&*\\
             0&0&\begin{array}{|c|} \hline P'\\ \hline \end{array}
\end{array}
\right),
\]
 where the decomposition in block form is with respect to the decomposition 
\begin{equation}\label{eq:Decomp_Omega_cap_Sigma}
T_{z}(\Omega_0\cap \Sigma_E)=\left(J\nabla H(z)\oplus \g z\right)\operp \mathcal R.
\end{equation}
From the compatibility of the reduced flow with the projection 
\[
Pr: \Omega_0\cap\Sigma_E\to G\backslash(\Omega_0\cap\Sigma_E)=\tilde\Sigma_E
\]
we get
\[
 Pr\circ ((g\Phi_T)_{|\Omega_0\cap\Sigma_E})=((\tilde \Phi_T)_{|\tilde \Sigma_E})\circ Pr
\]
and consequently
\begin{equation}\label{eq:Pr_commutes_restricted}
 Pr_{*,z}\circ((g\Phi_T)_{|\Omega_0\cap\Sigma_E})_{*,z}=((\tilde \Phi_T)_{|\tilde \Sigma_E})_{*,[z]}\circ Pr_{*,z}.
\end{equation}
With respect to the decomposition (\ref{eq:Decomp_OmegaRed_cap_Sigma}) and (\ref{eq:Decomp_Omega_cap_Sigma}) the differential of the projection $Pr_{*,z}:T_z(\Omega_0\cap \Sigma_E)\to T_{[z]}\tilde \Sigma_E$ can be written as
\[
 Pr_{*,z}=\left(\begin{array}{ccc}
             1 &0&0\\
             0 &0&\mathds{1}
\end{array}
\right).
\]
Inserting this expression in (\ref{eq:Pr_commutes_restricted}) and comparing both sides finally yields $P=\mathcal P$ and finishes the proof of this Lemma. 
\end{proof}
The following simple example shows that in systems with different orbit types it is quite likely that the $G$-non-degenerate assumption holds, and that there is a large class of systems where \ref{hyp:H2prime} and \ref{hyp:GNonDeg} hold, but not (H1), (H2) and (H3). 
\begin{exmpl}\label{exmpl:SingFromDiffGOrbits}
Let us consider the Hamiltonian of a 3-dimensional harmonic oscillator with two different frequencies $\omega_1=2\pi$ and $\omega_2=\frac{2\pi}{\sqrt 2}$
\[
 H(x,\xi)=\frac{1}{2}\left((2\pi)^2\cdot(x_1^2+x_2^2)+\left(\frac{2\pi}{\sqrt{2}}\right)^2x_3^2+|\xi|^2\right),~~x,\xi\in \R^3.
\]
The group $SO(2)$ acts on the phase space $\R^6$ symplectically as a symmetry group by acting canonically on the variables $x_1, x_2$ and $\xi_1,\xi_2$, respectively. As this action stabilizes the points $(0,0,x_3,0,0,\xi_3)$ it acts with different orbit types on $\Sigma_E$ for any $E>0$. In particular \ref{hyp:Reduction} is not satisfied. The zero level of the momentum map consists of the set of points with zero angular momentum in $x_3$ direction
\[
 \Omega_0=\{(x_1,x_2,x_3,\lambda\cdot x_1,\lambda\cdot x_2,\xi_3):~x_i,\xi_3,\lambda\in \R\}.
\]
The general solutions $(x(t),\xi(t))=\Phi_t(x_1,x_2,x_3,\xi_1,\xi_2,\xi_3)$ are explicitly given by
\begin{eqnarray*}
 x(t)&=&\left(x_1\cos(2\pi t)+\frac{\xi_1}{2\pi}\sin(2\pi t),~ x_2\cos(2\pi t)+\frac{\xi_2}{2\pi}\sin(2\pi t), ~x_3\cos\left(\frac{2\pi}{\sqrt{2}} t\right)+\right.\\
 &&\left.\frac{\sqrt{2}\xi_3}{2\pi}\sin\left(\frac{2\pi}{\sqrt{2}} t\right)\right),\\
 \xi(t)&=&\left(\xi_1\cos(2\pi t)-2\pi x_1\sin(2\pi t), ~ \xi_2\cos(2\pi t)-2\pi x_2\sin(2\pi t),~\xi_3 \cos\left(\frac{2\pi}{\sqrt{2}} t\right)-\right.\\
 &&\left. \frac{2\pi}{\sqrt{2}} x_3\sin\left(\frac{2\pi}{\sqrt{2}} t\right) \right).\\
\end{eqnarray*}
Thus, all points with $x_3=\xi_3=0$ are periodic with primitive period length 1, whereas all points with $x_1=x_2=\xi_1=\xi_2=0$ have primitive period length $\sqrt{2}$. All other points are not relative periodic at all, as the two frequencies have irrational ratio. Comparing 
\[
 J\nabla H(z)=\left(\xi_1,\xi_2,\xi_3,-(2\pi)^2 x_1,-(2\pi)^2 x_2,-\frac{(2\pi)^2}{\sqrt 2}\right)
\]
and
\[
 \g z= \R(x_2,-x_1,0,\xi_2,-\xi_1,0)
\]
we see that all points in $\Sigma_E\cap\Omega_0$ with $E>0$ are $G$-non-stationary thus condition \ref{hyp:H2prime} is fulfilled (note that there are however points $z\in\Sigma_E$ with $J\nabla H(z)\in \g z$).  Looking at the solutions $(x(t),\xi(t))$ one sees that the points $z\in\Omega_0\cap \Sigma_E \cap\{x_3=\xi_3=0\}$ have a smaller primitive period equal to $\tfrac{1}{2}$ as relative periodic orbits. Besides there is only one other relative periodic orbit, the periodic orbit with period $\sqrt 2$ in $\Omega_0\cap \Sigma_E \cap\{x_1=x_2=\xi_1=\xi_2=0\}$. 

Both relative periodic orbits are $G$-non-degenerate, which follows directly from the fact that the Hamilton flow $\Phi_t(x,\xi)$ is linear with respect to $(x,\xi)$ and the ratio of the frequencies $\omega_1$ and $\omega_2$ are irrational. To illustrate this, we take the relative periodic point $z_0=(1,0,0,0,0,0)\in\Omega_0\cap\Sigma_{2\pi^2}$ of the relative periodic orbit with primitive period $1/2$. It fulfills $g\Phi_{1/2}(z_0)=z_0$ for $g=-1\in SL(2,\R)$. The decomposition (\ref{eq:DecompTangentSpaceEquiv}) is given by
\[
 T_{z_0}\R^3=\tu{span}(e_1,e_5)\operp\tu{span}(e_2,e_4)\operp\mathcal R
\]
with $\mathcal R=\tu{span}(e_3,e_6)$ and $e_i$ being the canonical basis of $\R^6\cong T_{z_0}\R^6$. Then the restriction of the differential $(g\Phi_{k/2})_{*,z_0}$ to $\mathcal R\cong\R^2$ for $k\in \N_{>0}$ is given by
\[
 \mathcal P=((g\Phi_{k/2})_{*,z_0})_{|\mathcal R} =\left(\begin{array}{cc} 
                  \cos\left(\frac{2\pi k}{\sqrt{2}}\right)&\frac{\sqrt{2}}{2\pi}\sin\left(\frac{2\pi k}{\sqrt{2}}\right)\\
                  -\frac{2\pi}{\sqrt{2}}\sin\left(\frac{2\pi k}{\sqrt{2}}\right)&\cos\left(\frac{2\pi k}{\sqrt{2}}\right)                                               
                                               \end{array}
\right)
\]
This linear transformation has however no eigenvalue 1 for all $k \in \N_{>0}$. The arguments for the other relative periodic orbit are completely analogous.
\end{exmpl}
Similarly to the case of non-degenerate orbits, the hypothesis of $G$-non-degenerate orbits implies that the relative periodic orbits are discrete. This will be an important property for the proof of the equivariant Gutzwiller trace formula.
\begin{prop}\label{prop:DiscreteGNonDeg}
 Let $T>0$ and denote by $(\Gamma_E^{\tu{rel}})_T$ the set of all relative periodic orbits $\gamma$ in $\Omega_0\cap\Sigma_E$ with $0\leq|T_\gamma|\leq T$. If all $\gamma\in (\Gamma_E^{\tu{rel}})_T$ are non-degenerate, then they are discrete, i.e.~for each $z\in \gamma$ there is $\epsilon >0$ such that the only relative periodic points in $B_\epsilon(z)\cap\Omega_0\cap\Sigma_E$ with primitive period length smaller than $T$ are the points on the orbit $\gamma$.
\end{prop}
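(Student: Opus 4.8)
The plan is to argue by contradiction and to reduce the statement to an application of the implicit function theorem, in which the $G$-non-degeneracy hypothesis is exactly what forces a certain differential to be onto. Fix a relative periodic orbit $\gamma$ with primitive period $T_\gamma\le T$, a point $z_0\in\gamma$ and $g_0\in G$ with $g_0\Phi_{T_\gamma}(z_0)=z_0$. Since Hypothesis \ref{hyp:GNonDeg} implies Hypothesis \ref{hyp:H2prime}, we have $\nabla H\neq 0$ on $\Omega_0\cap\Sigma_E$, so $M:=\Omega_0\cap\Sigma_E$ is a smooth manifold near $z_0$ with $T_{z_0}M=(J\f z_0)^\perp=\f z_0\operp\mathcal R$ (by \cite[Lemma 4.5]{Cas06} and (\ref{eq:DecompTangentSpaceF})), and $M$ is invariant under the $F$-action since $\Omega_0$ and $\Sigma_E$ are invariant under $G$ and under $\Phi_t$. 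Assume the conclusion fails at $z_0$: then there is a sequence $w_k\in M\setminus\gamma$ with $w_k\to z_0$, each $w_k$ a relative periodic point of primitive period $T_k\in(0,T)$, and — conjugating the group element realising the primitive period onto $w_k$ — we may write $g_k\Phi_{T_k}(w_k)=w_k$ for some $g_k\in G$. By compactness of $G$ and of $[0,T]$ I pass to a subsequence with $g_k\to g_*\in G$ and $T_k\to T_*\in[0,T]$, so that $g_*\Phi_{T_*}(z_0)=z_0$.

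First I would exclude the case $T_*=0$. Iterating gives $g_k^{N}\Phi_{NT_k}(w_k)=w_k$ for all $N\in\N$; fixing $s>0$ and choosing $N_k$ with $N_kT_k\to s$, compactness of $G$ provides, along a further subsequence, some $h\in G$ with $h\Phi_s(z_0)=z_0$, hence $\Phi_s(z_0)\in Gz_0$. As $s>0$ is arbitrary and $Gz_0$ is closed, the whole flow line through $z_0$ lies in $Gz_0$, so $J\nabla H(z_0)=\tfrac{d}{ds}\Phi_s(z_0)|_{s=0}\in T_{z_0}(Gz_0)=\g z_0$, contradicting Hypothesis \ref{hyp:H2prime}. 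Hence $T_*>0$, and I set $f_*:=(T_*,g_*)\in F$, which satisfies $f_*z_0=z_0$ and $0<T_*\le T$.

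The central step is to analyse the set $\mathcal Z:=\{(f,w)\in F\times M: fw=w\}$ near $(f_*,z_0)$. Choosing local coordinates $\psi$ on $M$ around $z_0$ and using the $F$-invariance of $M$, one can write $\mathcal Z=\Lambda^{-1}(0)$ near $(f_*,z_0)$ with $\Lambda(f,w):=\psi(fw)-\psi(w)$, a smooth map into $\R^{\dim M}$. Its differential at $(f_*,z_0)$ acts by $v\mapsto (f_*)_{*,z_0}v-v$ in the $w$-direction and by $X\mapsto (f_*)_{*,z_0}(Xz_0)$ (for $X\in\f$) in the $f$-direction; since $(f_*)_{*,z_0}$ restricts to an isomorphism of the invariant subspace $\f z_0$, the map $\Lambda$ is a submersion at $(f_*,z_0)$ if and only if $((f_*)_{*,z_0}-\mathds 1)(T_{z_0}M)+\f z_0=T_{z_0}M$. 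This is where Lemma \ref{lem:GeneralFormDgPhi} is used: with respect to $T_{z_0}M=\R J\nabla H(z_0)\oplus\g z_0\oplus\mathcal R$ the operator $(f_*)_{*,z_0}|_{T_{z_0}M}$ is block upper triangular with diagonal blocks $1$, $\mathrm{Ad}(g_*)$ and $\mathcal P$, and the $G$-non-degeneracy assumed for the orbits under consideration, applied at $z_0$ to the realisation $(T_*,g_*)$ — legitimate since $T_*\le T$ (Definition \ref{def:GNonDeg}) — means $\mathcal P-\mathds 1$ is invertible, whence $((f_*)_{*,z_0}-\mathds 1)(\mathcal R)$ surjects onto $\mathcal R$ and the displayed equality follows. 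Thus $\mathcal Z$ is, near $(f_*,z_0)$, a smooth submanifold of dimension $\dim F$.

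To conclude I would exhibit inside $\mathcal Z$ the set $\{(f'f''(f')^{-1},f'z_0)\}$, with $f'$ running over a slice through the identity transverse to the stabiliser $F_{z_0}:=\{f\in F:fz_0=z_0\}$ and $f''$ over a neighbourhood of $f_*$ in $F_{z_0}$; a short differential computation shows this is an immersed submanifold of dimension $\dim(F/F_{z_0})+\dim F_{z_0}=\dim F$ through $(f_*,z_0)$, every point of which has $w$-coordinate on $\gamma=Fz_0$. By the dimension count it is open in $\mathcal Z$ near $(f_*,z_0)$, so a whole neighbourhood of $(f_*,z_0)$ in $\mathcal Z$ projects into $\gamma$; but $(T_k,g_k,w_k)\in\mathcal Z$ tends to $(f_*,z_0)$ with $w_k\notin\gamma$ — a contradiction. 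Running this with $\epsilon=1/k$ yields the $\epsilon$ asserted in the statement. \textbf{The main obstacle} I expect is the surjectivity of $D\Lambda$ at $(f_*,z_0)$: the $G$-non-degeneracy condition (``$\mathcal P$ has no eigenvalue $1$'') is indispensable exactly there, and nowhere else in an essential way; the remaining bookkeeping — keeping track of the non-compact factor $\R$ of $F=\R\times G$, selecting a slice transverse to $F_{z_0}$, and excluding $T_*=0$ — is routine but has to be done with some care since $\gamma$ and $\Omega_0$ need not be orbit-type homogeneous.
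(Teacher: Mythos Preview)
Your implicit-function-theorem strategy is a genuine alternative to the paper's argument, and your explicit exclusion of the case $T_*=0$ via \ref{hyp:H2prime} is a point the paper passes over rather quickly. However, there is a real gap at the very first step: you assert that $M=\Omega_0\cap\Sigma_E$ is a smooth manifold near $z_0$ with $T_{z_0}M=(J\f z_0)^\perp$, citing \cite[Lemma~4.5]{Cas06}. That lemma is proved under the hypothesis of reduction \ref{hyp:Reduction}, which guarantees a single orbit type on $\Omega_0\cap U$; but the whole purpose of Proposition~\ref{prop:DiscreteGNonDeg} is to work \emph{without} \ref{hyp:Reduction}. When $G$ acts with several orbit types near $z_0$ --- in particular whenever $\dim G_{z_0}>0$, so that the momentum map $\mu$ fails to be a submersion at $z_0$ --- the set $\Omega_0$ need not be a manifold there, your chart $\psi$ does not exist, and $\Lambda$ is not defined. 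Nor can you rescue surjectivity of $D\Lambda$ by enlarging the ambient space to $\Sigma_E$ or $\R^{2n}$: then the block $\mathcal A-\mathds 1$ acting on $J\f z_0$ enters the picture, and nothing in the hypotheses forces it to be invertible.

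The paper avoids this entirely by never invoking a manifold structure on $M$. It slides each $z_n$ along its own $F$-orbit into the affine slice $z_0+(\f z_0)^\perp$, extracts a unit limit direction $v=\lim (z_n-z_0)/\|z_n-z_0\|$, and checks by hand that $v\in\mathcal R$ and that $(f_\infty)_{*,z_0}v-v\in\f z_0$, producing an eigenvector of $\mathcal P$ with eigenvalue $1$. The only use of $z_n\in\Omega_0\cap\Sigma_E$ is the pointwise orthogonality $v\perp J\f z_0$, which follows from differentiating the defining equations $H(z_n)=E$ and $\langle z_n,JAz_n\rangle=0$ along the sequence --- no smoothness of $\Omega_0$ is needed. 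Your argument could be repaired by replacing the IFT step with this tangent-cone reasoning, but at that point it collapses into the paper's proof; the clean submersion picture you sketch is exactly what \ref{hyp:Reduction} would buy, and is not available here.
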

\begin{proof}
 Suppose that $(\Gamma_E^{\tu{rel}})_T$ is not discrete, so that there exists a relative periodic point $z_0\in \Omega_0\cap \Sigma_E$ with $g_0\Phi_{T_0}(z_0)=z_0$ and a sequence $z_n\in \Omega_0\cap\Sigma_E$, $z_n\to z_0$, of relative periodic points belonging to mutually disjoint relative periodic orbits which fulfill $g_n\Phi_{T_n}(z_n)=z_n$ with $|T_n|\leq T$. At the point $z=z_0$ we will now construct an eigenvector of the matrix $\mathcal P$ in (\ref{eq:GeneralFormDgPhi}) with eigenvalue 1 leading to a contradiction to the $G$-non-degenerate assumption.
 
 \emph{Step 1:} As $[-T,T]\times G$ is compact we can assure after going to a subsequence, that $f_n:=(T_n,g_n)\in [-T,T]\times G\subset F$ converges to the element $f_\infty=(T_\infty,g_\infty)\in [-T,T]\times G$. 
 
 \emph{Step 2:} From 
\begin{equation}\label{eq:lim_z0}
 z_0=\lim\limits_{n\to\infty} z_n=\lim\limits_{n\to\infty} f_n z_n = f_\infty z_0  
\end{equation}
we deduce that $z_0$ is relative periodic with period length $T_\infty$.
 
 \emph{Step 3:} By definition of the vector space $\g z_0$, for each normed vector $e_i\in \g z_0$ there is $A\in \g$ such that $\langle A{z_0}, e_i\rangle \neq 0$. From the continuity of the $G$-action it follows that there is a neighborhood $U\subset \R^{2n}$ of $z_0$ such that for all $z\in U$ we still have $\langle A{z}, e_i\rangle \neq 0$. From the continuity of $J\nabla H$ follows that $U$ can be chosen such that additionally $\langle J\nabla H(z), J\nabla H(z_0)\rangle \neq 0$. We can thus choose a neighborhood of $z_0$ such that all relative periodic orbits ($F$-orbits) intersect the affine vector space tangent to $(\f z_0)^\perp=(\R J\nabla H(z_0)\oplus \g z_0)^\perp\subset T_{z_0}$ transversally. Thus, for sufficiently large $n$ we can assume that $z_n$ is in such a neighborhood, and consequently we can choose a different point $\tilde z_n$ on the same relative periodic orbit with $\tilde z_n-z_0 \in (\R J\nabla H(z_0)\oplus \g z_0)^\perp$. Consequently we can suppose without loss of 
generality that our sequence of points $z_n$ fulfills $z_n-z_0\in (\R J\nabla H(z_0)\oplus \g z_0)^\perp$.
 
 \emph{Step 4:} Consider the sequence $\frac{z_n-z_0}{\|z_n-z_0\|}\in \mathcal S^{2n-1}$. After restricting once more to a subsequence, we can assume, that $\frac{z_n-z_0}{\|z_n-z_0\|}\to v$ with $\|v\|=1$. As $z_n\in \Omega_0\cap\Sigma_E$, we get $v\in (\R\nabla H(z_0)\oplus J\g z_0)^\perp$. From Step 3 we furthermore obtain 
\[
v\in \Big((\R\nabla H(z_0)\oplus J\g z_0)\operp(\R J\nabla H(z_0)\oplus \g z_0)\Big)^\perp. 
\]
So $v$ belongs to the subspace $\mathcal R$ of (\ref{eq:DecompTangentSpaceEquiv}) and the aim of the remaining steps in this proof is to show that it is an eigenvector of the matrix $\mathcal P$.

 \emph{Step 5:} Setting $t_n:=\|z_n-z_0\|$ we can choose a smooth curve $\gamma\subset \Sigma_E$ such that $\gamma(0)=z_0$ and $\gamma(t_n)=z_n$. For this curve we calculate $\dot \gamma(0)=\lim\limits_{n\to\infty}\frac{\gamma(t_n)-\gamma(0)}{t_n}=v$. 
 
 \emph{Step 6:}
 With this smooth path and (\ref{eq:lim_z0}) we calculate 
 \begin{eqnarray*}
  (f_\infty)_{*,z_0}(v)&=&\frac{d}{dt}f_\infty\gamma(t)_{|t=0}\\
  &=&\lim\limits_{n\to\infty}\frac{f_\infty\gamma(t_n)-f_\infty\gamma(0)}{t_n}\nonumber\\
  &=&\lim\limits_{n\to\infty}\frac{f_\infty z_n-f_n z_n+z_n-z_0}{t_n}\nonumber\\
  &=&v+\lim\limits_{n\to\infty}\frac{f_\infty z_n -f_n z_n}{t_n}.
 \end{eqnarray*}
 Now, using the fact that the $F$-action is smooth, Taylor expansion yields for any $f\in F$
 \[
  f z_n=fz_0+f_{*,z_0}(z_n-z_0)+O(t_n^2). 
 \]
 Consequently we obtain
 \begin{eqnarray*}
  (f_\infty)_{*,z_0}(v)&=&v+\lim\limits_{n\to\infty}\left[\frac{f_\infty z_0 -f_n z_0}{t_n} + ((f_\infty)_{*,z_0}-(f_n)_{*,z_0})\left(\frac{z_n-z_0}{t_n} \right)\right]\\
  &=&v+\lim\limits_{n\to\infty}\left[\frac{f_\infty z_0 -f_n z_0}{t_n}\right].
 \end{eqnarray*}
 Since obviously 
 \[
  \lim\limits_{n\to\infty}\left[\frac{f_\infty z_0 -f_n z_0}{t_n}\right]\in \f z_0,
 \]
 and $\mathcal P$ was the restriction of $(f_\infty)_{*,z_0}$ to $(J\f z_0\oplus \f z_0)^\perp$, $v$ is an eigenvector of $\mathcal P$ with eigenvalue 1 which finishes the proof.
 \end{proof}

\subsection{Proof of the equivariant Gutzwiller formula}\label{sec:ProofGutzwiller}
We are now ready to prove the equivariant Gutzwiller formula under the assumption of $G$-non-degenerate orbits. According to Section \ref{sec:SpecDist} the Gutzwiller terms are given by the asymptotic expansion of the spectral distribution $\rho_{E, h}^\chi(f)$ for a Schwartz function $f$ with $0\notin \tu{supp}\hat f$. In order to apply Proposition \ref{prop:DiscreteGNonDeg} and to avoid problems concerning the Ehrenfest time, we will additionally need to assume that $\hat f$ is compactly supported. 

As a first step we show that already under the new conditions \ref{hyp:H2prime} and \ref{hyp:GNonDeg} and without any conditions on the group action, the critical set in (\ref{eq:OscIntegralEquiv}) is smooth.

\begin{prop}\label{prop:GutzwSmoothCrit}
 Let $f$ be a Schwartz function with $\supp\hat f\subset [-T,T]\setminus\{0\}$. If for $E\in \R$ the Hamilton dynamics is $G$-non-stationary, the energy shell $\Sigma_E\subset \R^{2n}$ is compact, and all relative periodic orbits in $\Omega_0\cap\Sigma_E$ having a relative period contained in $\supp \hat f$ are $G$-non-degenerate, then the set $\mathcal C_{\phi_E} \cap \left( \R^{2n}\times \supp \hat f\times G\right)$ is a disjoint finite union of smooth submanifolds of dimension dim\,$F$.  
\end{prop}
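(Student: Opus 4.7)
My plan is to exploit the combined action of $F=\R\times G$ introduced in Section~\ref{sec:DynaAssumptions} in order to view $\mathcal C_{\phi_E}$ as the $F$-fixed-point set inside $(\Omega_0\cap\Sigma_E)\times F$, stratify it by relative periodic orbits $\gamma$ and by connected components of the stabilizers $F_z$, and then identify each stratum as a smooth homogeneous fiber bundle of dimension $\dim F$.

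By Proposition~\ref{prop:CritSetGutzwiller}, $\mathcal C_{\phi_E}=\{(z,f)\in(\Omega_0\cap\Sigma_E)\times F : fz=z\}$. Given a relative periodic orbit $\gamma$ of primitive period $T_\gamma>0$ and a base point $z_0\in\gamma$ with $\Phi_{T_\gamma}(z_0)=g_\gamma^{-1}z_0$, the set $\{t\in\R:\Phi_t(z_0)\in Gz_0\}$ is a closed subgroup of $\R$, and under \ref{hyp:H2prime} it cannot equal $\R$ (otherwise $J\nabla H(z_0)\in\g z_0$), hence it equals $T_\gamma\Z$. Consequently
\[
 F_{z_0}=\bigsqcup_{k\in\Z}\{kT_\gamma\}\times G_{z_0}g_\gamma^k,
\]
and the contribution of $\gamma$ to $\mathcal C_{\phi_E}$ decomposes as $\bigsqcup_k N_\gamma^{(k)}$, where $N_\gamma^{(k)}:=\{(z,kT_\gamma,g):z\in\gamma,\ g\in G_zg_\gamma^k\}$. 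After intersection with $\R^{2n}\times\supp\hat f\times G$, only the finitely many $k\neq 0$ with $|k|T_\gamma\leq T$ contribute, and Proposition~\ref{prop:DiscreteGNonDeg} combined with the compactness of $\Omega_0\cap\Sigma_E\subset\Sigma_E$ implies that only finitely many such orbits $\gamma$ occur.

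To show that each $N_\gamma^{(k)}$ is a smooth embedded submanifold of $\R^{2n}\times\R\times G$ of dimension $\dim F$, I would consider
\[
 \Psi:F\times F_{z_0}^{(k)}\longrightarrow\R^{2n}\times\R\times G,\qquad (f,h)\longmapsto(fz_0,fhf^{-1}),
\]
with $F_{z_0}^{(k)}:=\{kT_\gamma\}\times G_{z_0}g_\gamma^k$. Using $F_z=fF_{z_0}f^{-1}$ whenever $z=fz_0$, one checks that $\Psi$ surjects onto $N_\gamma^{(k)}$ and that $\Psi(f_1,h_1)=\Psi(f_2,h_2)$ precisely when $(f_2,h_2)=(f_1h',(h')^{-1}h_1h')$ for some $h'\in F_{z_0}$. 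The resulting right action of $F_{z_0}$ on $F\times F_{z_0}^{(k)}$ is clearly free, and proper since $F_{z_0}$ is closed in $F$, so $\Psi$ descends to a smooth bijection $\bar\Psi$ from the smooth quotient $(F\times F_{z_0}^{(k)})/F_{z_0}$ onto $N_\gamma^{(k)}$. Since the $\R$-coordinate of $F_{z_0}$ ranges over the discrete set $T_\gamma\Z$, one has $\dim F_{z_0}=\dim G_{z_0}$, so the quotient has dimension $\dim F+\dim G_{z_0}-\dim G_{z_0}=\dim F$. Finally, $\gamma\cong F/F_{z_0}$ and $F_{z_0}^{(k)}\cong G_{z_0}$ are both compact, hence the source of $\bar\Psi$ is compact; together with the immersion property this forces $\bar\Psi$ to be a diffeomorphism onto its Hausdorff image, and thus a smooth embedding.

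The principal technical point is the immersion claim for $\bar\Psi$. At the base point $(e,(kT_\gamma,g_\gamma^k))$, a direct computation yields $d\Psi(\alpha,h_0\tilde\beta)=(\alpha z_0,[\alpha,h_0]+h_0\tilde\beta)$, so $\ker d\Psi$ consists of pairs $(\alpha,h_0\tilde\beta)$ with $\alpha\in\f_{z_0}$ and $\tilde\beta=\alpha-\tu{Ad}(h_0^{-1})\alpha$; this is exactly the tangent space to the $F_{z_0}$-orbit, and the general case follows by $F$-equivariance. The only other point requiring care is ensuring that the description of $F_{z_0}$ given above is exhaustive, which I sketched above using \ref{hyp:H2prime}.
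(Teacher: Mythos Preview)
Your argument is correct and follows essentially the same route as the paper: you decompose $\mathcal C_{\phi_E}$ into the pieces $N_\gamma^{(k)}$ (the paper's $M_{k,\gamma}$), realize each as the image of the same injective immersion $\Psi:(F\times F_{z_0}^{(k)})/F_{z_0}\to\R^{2n}\times F$, $(f,h)\mapsto(fz_0,fhf^{-1})$, and invoke compactness of the source together with Proposition~\ref{prop:DiscreteGNonDeg} for finiteness. One small slip: your explicit description $N_\gamma^{(k)}=\{(z,kT_\gamma,g):z\in\gamma,\ g\in G_zg_\gamma^k\}$ is only literally correct when $z$ lies in the $\Phi_t$-orbit of $z_0$; for $z=g'\Phi_s(z_0)$ the correct coset is $g'G_{z_0}g_\gamma^k(g')^{-1}$, but this does not affect the rest of your proof since $\Psi$ produces the right set regardless.
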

To prove this proposition, recall from Proposition \ref{prop:CritSetGutzwiller} that the critical set is given by
\[
 \mathcal C_{\phi_E}=\{(z,t,g)\in \R^{2n}\times \R \times G,~~z\in\Omega_0\cap\Sigma_E,~g\Phi_t(z)=z\}
\]
which can be written in terms of the $F$-action as
\[
 \mathcal C_{\phi_E}=\{(z,f)\in  \R^{2n}\times F,~~z\in\Omega_0\cap\Sigma_E,~f\in F_z\}
\]
with $F_z$ being the stabilizer group of $z$ for the $F$-action. We are especially interested in the set of all $t$-values in this stabilizer group. In particular, we need the following
\begin{lem}
 $\mathcal L_z^\tu{rel}:=\{t\in\R,~\exists g\in G\tu{ such that } (t,g) \in F_z\}\subset \R$ is a closed subgroup.
\end{lem}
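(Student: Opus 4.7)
The plan is to verify the two defining properties of a closed subgroup separately: first the algebraic subgroup structure under addition in $\R$, then topological closedness. Both parts should be essentially mechanical once one unpacks the definition of $F_z$ and uses the fact that $G$-action commutes with the Hamilton flow (which follows from $G$-invariance of $H$).

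For the subgroup property, I would start by observing that $0 \in \mathcal L_z^{\tu{rel}}$ is witnessed by the identity $(0,e) \in F_z$. For closure under addition, given $t_1,t_2 \in \mathcal L_z^{\tu{rel}}$ with witnesses $(t_1,g_1),(t_2,g_2)\in F_z$, I compute the successive action $g_1 \Phi_{t_1}(g_2\Phi_{t_2}(z)) = g_1 g_2 \Phi_{t_1+t_2}(z)$, using the commutation of the $G$-action with $\Phi_t$. This equals $z$, so $(t_1+t_2, g_1g_2)\in F_z$. For the inverse, from $g\Phi_t(z) = z$ one gets $\Phi_t(z) = g^{-1}z$, and applying $\Phi_{-t}$ together with the commutation $\Phi_{-t}\circ g^{-1} = g^{-1}\circ \Phi_{-t}$ yields $g^{-1}\Phi_{-t}(z) = z$, so $-t \in \mathcal L_z^{\tu{rel}}$.

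For closedness, I would take a sequence $t_n\in \mathcal L_z^{\tu{rel}}$ with $t_n\to t_\infty$ and pick $g_n\in G$ such that $g_n\Phi_{t_n}(z) = z$. Since $G$ is compact, a subsequence $g_{n_k}$ converges to some $g_\infty\in G$, and by joint continuity of the $F$-action on $z$ we get $g_\infty \Phi_{t_\infty}(z) = \lim_k g_{n_k}\Phi_{t_{n_k}}(z) = z$. Hence $(t_\infty,g_\infty)\in F_z$ and $t_\infty\in \mathcal L_z^{\tu{rel}}$.

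There is no serious obstacle here; the only point that might need a sentence of care is the inverse computation, which relies explicitly on the fact that $[G,\Phi_t] = 0$ (a consequence of the $G$-invariance of the Hamiltonian assumed throughout Section \ref{sec:CritSet}). The closedness part is pure compactness of $G$ together with continuity, and neither \ref{hyp:H2prime} nor \ref{hyp:GNonDeg} is invoked in the lemma.
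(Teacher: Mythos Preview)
Your proof is correct and follows essentially the same approach as the paper: the paper simply declares the subgroup property ``clear from the definition'' (since $\mathcal L_z^{\tu{rel}}$ is the image of the subgroup $F_z\subset F=\R\times G$ under the projection homomorphism to $\R$), while you spell out the same computation explicitly; and for closedness both you and the paper extract a convergent subsequence in $G$ by compactness and pass to the limit. The only cosmetic difference is that the paper phrases the last step as ``$F_z$ is closed'' rather than ``the action is continuous'', which amounts to the same thing.
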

\begin{proof}
 The subgroup property is clear from the definition, so that it remains to prove the closeness. Suppose that there is a sequence $t_n\in \mathcal L_z^\tu{rel}$ such that $t_n\to t_\infty$. For each $t_n$ there is a $g_n\in G$ such that $(t_n,g_n)\in F_z$. After restricting to a subsequence one can assume that $g_n$ also converges since $G$ is compact. Consequently $(t_n,g_n)$ converges in $F$ and from the closeness of $F_z$ one concludes that $t_\infty\in \mathcal L_z^\tu{rel}$.
\end{proof}
\begin{cor}\label{cor:L_red}
 If $z$ is a $G$-non-stationary, relative periodic point, then there is a $T_z>0$ such that $\mathcal L^\tu{rel}_z =T_z\cdot\Z$.
\end{cor}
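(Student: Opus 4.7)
The plan is to use the classical classification of closed subgroups of $\R$ and then rule out the two degenerate cases using the two hypotheses.

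First I would invoke the fact that any closed subgroup of $(\R,+)$ is of one of three forms: the trivial group $\{0\}$, a discrete lattice $T_z \cdot \Z$ for some $T_z > 0$, or all of $\R$. Since the preceding lemma gives us that $\mathcal L_z^{\tu{rel}}$ is closed and a subgroup, it suffices to exclude the cases $\mathcal L_z^{\tu{rel}}=\{0\}$ and $\mathcal L_z^{\tu{rel}}=\R$.

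The case $\mathcal L_z^{\tu{rel}}=\{0\}$ is excluded immediately by the assumption that $z$ is a relative periodic point: by definition there exists some $T\neq 0$ and $g\in G$ with $g\Phi_T(z)=z$, hence $T\in \mathcal L_z^{\tu{rel}}$.

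The case $\mathcal L_z^{\tu{rel}}=\R$ is the one where the $G$-non-stationary hypothesis \ref{hyp:H2prime} enters. If every $t\in\R$ belonged to $\mathcal L_z^{\tu{rel}}$, then for each $t$ there would exist $g_t\in G$ with $g_t\Phi_t(z)=z$, i.e.\ $\Phi_t(z)=g_t^{-1}z\in Gz$. Thus the entire Hamilton trajectory through $z$ would lie inside the smooth submanifold $Gz\subset \R^{2n}$. Differentiating the smooth curve $t\mapsto\Phi_t(z)$ at $t=0$, its tangent vector $J\nabla H(z)=\tfrac{d}{dt}\Phi_t(z)|_{t=0}$ must then belong to $T_z(Gz)=\g z$. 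This directly contradicts the $G$-non-stationary assumption $J\nabla H(z)\notin \g z$. Hence $\mathcal L_z^{\tu{rel}}\neq\R$, and the only remaining possibility is $\mathcal L_z^{\tu{rel}}=T_z\cdot\Z$ for some $T_z>0$.

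There is essentially no obstacle here; the only subtle point is that although the selectors $t\mapsto g_t$ need not be continuous, the argument only requires the pointwise inclusion $\Phi_t(z)\in Gz$ for each $t$, which is enough to deduce that the smooth Hamilton curve lies in $Gz$ and therefore has tangent vector in $\g z$. The $G$-non-stationary hypothesis is exactly what converts this tangency statement into a contradiction.
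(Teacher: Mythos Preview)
Your proof is correct and follows essentially the same approach as the paper: invoke the classification of closed subgroups of $\R$, use relative periodicity to exclude the trivial group, and use the $G$-non-stationary hypothesis to exclude $\R$. Your version is slightly more explicit in spelling out why $\mathcal L_z^{\tu{rel}}=\R$ forces $J\nabla H(z)\in\g z$, but the logic is the same.
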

\begin{proof}
 All closed subgroups of $\R$ are either empty, $\R$ or of the type $k\Z$ for some $k>0$. As $z$ is relative periodic, we can exclude the empty set. As $z$ is $G$-non-stationary, the trajectory of the Hamilton flow cannot be contained in the $G$-orbit of $z$, so we can exclude $\R$.
\end{proof}
If $z$ is an arbitrary relative periodic point, the corresponding  relative periodic orbit $\gamma_0$ is exactly its orbit under the $F$-action and there is an injective immersion $i:F/F_z\to \gamma_0$. Note that in general an $F$-orbit does not have to be an embedded submanifold in $\R^{2n}$. However, if $z$ is relative periodic, $\mathcal L_z^\tu{rel}=T_z\cdot\Z$ or $\mathcal L_z^\tu{rel}=\R$, and consequently $F/F_z$ is compact. The immersion $i$ is thus additionally closed, and $\gamma_0$ is diffeomorphic to $F/F_z$ as an embedded submanifold. 

Next, note that the critical set $\mathcal C_{\Phi_E}$ is essentially a union of the isotropy bundles of certain relative periodic orbits $\gamma$, which are defined as
\[
\tu{Iso}(\gamma):=\{(z,f)\in  \gamma\times F,~ f\in F_z\}\subset \R^{2n}\times F.
\]
For those isotropy bundles we have the following
\begin{prop}\label{prop:DecompIsoGamma}
  Let $z\in \R^{2n}$ be a $G$-non-stationary, relative periodic point and $\gamma$ its relative orbit. Then $\tu{Iso}(\gamma)$ can be written as
  \begin{equation}\label{eq:decomp_Iso}
  \tu{Iso}(\gamma)=\bigcup\limits_{k\in \Z} M_{k,\gamma},
  \end{equation}
  where
  \begin{equation}\label{eq:M_k_gamma}
    M_{k,\gamma}:=\{(z,kT_\gamma,g)\in  \gamma\times \R\times G:~ g\Phi_{kT_\gamma}(z)=z\}\subset \R^{2n}\times F
  \end{equation}
  are smooth disjoint submanifolds.
\end{prop}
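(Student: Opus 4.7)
The overall strategy is to stratify $\tu{Iso}(\gamma)$ by the time coordinate using Corollary \ref{cor:L_red}, and then to recognize each piece $M_{k,\gamma}$ as a smooth fiber bundle over the compact embedded submanifold $\gamma\cong F/F_{z_0}$, with fiber diffeomorphic to $G_{z_0}$.

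First I would establish the decomposition (\ref{eq:decomp_Iso}) together with the disjointness statement. By definition a triple $(z',t,g)$ lies in $\tu{Iso}(\gamma)$ iff $z'\in\gamma$ and $t\in\mathcal{L}^{\tu{rel}}_{z'}$. Since $F=\R\times G$ is a direct product, conjugation in $F$ preserves the $\R$-component, so all stabilizers $F_{z'}$ with $z'\in\gamma$ have the same $\R$-projection; by Corollary \ref{cor:L_red} applied at $z_0\in\gamma$ this projection equals $T_\gamma\Z$. Different values of $k$ then give disjoint $M_{k,\gamma}$ since $T_\gamma>0$.

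For the smoothness of a single $M_{k,\gamma}$, fix $z_0\in\gamma$ and $g_k\in G$ with $g_k\Phi_{kT_\gamma}(z_0)=z_0$. Commutativity of the $G$-action with $\Phi_t$ implies $G_{\Phi_t z_0}=G_{z_0}$ for every $t\in\R$, and combining this with the identity $g_kz_0=\Phi_{-kT_\gamma}(z_0)$ yields the key normalization $g_kG_{z_0}g_k^{-1}=G_{z_0}$. In particular the fiber of the projection $\pi:M_{k,\gamma}\to\gamma$, $(z',kT_\gamma,g)\mapsto z'$, over $z_0$ is exactly the closed coset $G_{z_0}g_k\subset G$, which is diffeomorphic to $G_{z_0}$. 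To obtain local triviality of $\pi$, choose a smooth local section $\sigma:U\to F$ of the orbit map $F\to F/F_{z_0}\cong\gamma$ around any base point in $\gamma$, denote by $\eta:U\to G$ its projection to the $G$-factor, and consider the map
\[
\phi:U\times G_{z_0}\longrightarrow\pi^{-1}(U),\qquad (z',g_0)\longmapsto\bigl(z',\,kT_\gamma,\,\eta(z')g_0g_k\eta(z')^{-1}\bigr).
\]
A direct computation, commuting the $G$-action past $\Phi_t$ and using both the conjugation identity $G_{z'}=\eta(z')G_{z_0}\eta(z')^{-1}$ and the normalization above, shows that $\phi$ is a diffeomorphism onto $\pi^{-1}(U)$, with smooth inverse given by $(z',kT_\gamma,g)\mapsto(z',\,\eta(z')^{-1}g\eta(z')g_k^{-1})$.

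Covering $\gamma$ by such trivializing neighborhoods presents $M_{k,\gamma}$ as a smooth fiber bundle over the compact base $\gamma$, hence as a smoothly embedded submanifold of $\R^{2n}\times\R\times G$ of dimension $\tu{dim}(F)$. The main technical hurdle is the normalization identity $g_kG_{z_0}g_k^{-1}=G_{z_0}$: without it the coset $G_{z_0}g_k$ is not preserved under the natural $F_{z_0}$-action by conjugation, and the charts $\phi$ fail to patch consistently across overlaps. Once this identity is in hand, the rest is careful bookkeeping with the direct-product structure of $F$.
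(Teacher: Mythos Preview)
Your proof is correct and reaches the same conclusion, but by a genuinely different route than the paper. The paper argues globally: it identifies the full isotropy bundle with a quotient $(F\times F_{z_0})/F_{z_0}$ (where $F_{z_0}$ acts by right multiplication on the first factor and by conjugation on the second), observes that the $\R$-commutativity of $F$ makes each compact slice $F_{z_0}^{(k)}=\{(kT_\gamma,g)\in F_{z_0}\}$ invariant under this conjugation, and then realizes $M_{k,\gamma}$ as the image of the compact manifold $(F\times F_{z_0}^{(k)})/F_{z_0}$ under the injective immersion $\mathcal I\bigl((f,f_z)F_{z_0}\bigr)=(fz_0,\,ff_zf^{-1})$. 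Compactness upgrades this immersion to an embedding in one stroke.

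You instead work locally: you exhibit $M_{k,\gamma}$ as a smooth $G_{z_0}$-bundle over the embedded base $\gamma\cong F/F_{z_0}$ by writing down explicit trivializations from local sections of the orbit map. Your normalization identity $g_kG_{z_0}g_k^{-1}=G_{z_0}$ is precisely the infinitesimal shadow of the paper's observation that conjugation by $F_{z_0}$ preserves $F_{z_0}^{(k)}$; it is what guarantees that the model fiber is independent of the choice of $g_k$ and that your charts are compatible on overlaps. The paper's quotient construction is shorter and makes the embeddedness automatic via compactness, while your fiber-bundle picture is more explicit and immediately yields the dimension count $\dim\gamma+\dim G_{z_0}=(\dim F-\dim F_{z_0})+\dim G_{z_0}=\dim F$ (using that $G$-non-stationarity forces $\dim F_{z_0}=\dim G_{z_0}$). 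Both arguments are valid; yours trades conceptual economy for concreteness.
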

\begin{proof}
 The disjoint decomposition (\ref{eq:decomp_Iso}) follows directly from Corollary \ref{cor:L_red}. It only remains to prove the submanifold property of $M_{k,\gamma}$.  
 
 We will again construct a closed injective immersion, and first recall that for general homogeneous spaces, the isotropy bundle can be written as a quotient (see e.g. \cite[Section 1.11 and 2.4]{DK00}). Consider therefore $F_z$ acting from the right on $F\times F_z$ by right multiplication on the first factor $F$ and by conjugation on the second factor $F_z$. This action is proper and free as the action in the first component is the invertible right multiplication in $F$. Consequently, the quotient $(F\times F_z)/F_z$ carries the structure of a smooth manifold. Furthermore, there is an injective immersion on $\tu{Iso}(\gamma_0)$ which is explicitly given by
 \[
  \mathcal I:\Abb{(F\times F_z)/F_z}{\tu{Iso}(\gamma_0)\subset \R^{2n}\times F}{(f,f_z)F_z}{(fz,ff_zf^{-1}).}
 \]
 For $k\in \Z$ we define the compact sets $F_z^{(k)}:=\{(kT_\gamma, g)\in \R\times G: g\Phi_{k T_\gamma}(z)=z\}\subset F_z$. As $F=\R\times G$ is commutative in its $\R$ component, conjugation with $F_z$ leaves $F_z^{(k)}$ invariant, and $(F\times F^{(k)}_z)/F_z$ is a compact smooth manifold. It is now straightforward to check that $\mathcal I((F\times F^{(k)}_z)/F_z)=M_{k,\gamma}$. So the sets $M_{k,\gamma}$ are images of a compact manifold under an injective immersion and consequently embedded submanifolds.
\end{proof}

We are now able to prove Proposition \ref{prop:GutzwSmoothCrit}.
\begin{proof}[Proof of Proposition \ref{prop:GutzwSmoothCrit}]
 Proposition \ref{prop:DiscreteGNonDeg} implies that the relative periodic orbits are discrete. Together with the assumption that $\Sigma_E$ is compact, this implies that there are only finitely many relative periodic orbits in $(\Gamma_E^{\tu{rel}})_T$, so that 
 \[
  \mathcal C_{\phi_E} \cap \left(\R^{2n}\times\supp \hat f\times G\right) =\bigcup\limits_{\gamma\in (\Gamma_E^{\tu{rel}})_T}\left(\bigcup\limits_{k\in \Z: kT_{\gamma}\in \supp\hat f}M_k,\gamma\right).
 \]
 Proposition \ref{prop:DecompIsoGamma} assures that the $M_{k,\gamma}$ are disjoint, smooth, submanifolds of dimension  dim\,$F$, and as $\tu{supp}\hat f$ is compact, for each $\gamma\in (\Gamma_E^{\tu{rel}})_T $ there are also only finitely many $k\in \Z$ with $k T_\gamma \in\tu{supp}\hat f$.
\end{proof}
In oder to apply the generalized stationary phase theorem to (\ref{eq:OscIntegralEquiv}) it only remains to show the following proposition.
\begin{prop}\label{prop:NonDegHess}
 Let $(z,t,g)\in \mathcal C_{\phi_E}\cap\left( \R^{2n}\times\supp \hat f\times G\right)$, and assume that the dynamics is $G$-non-stationary and that all relative periodic orbits having relative period contained in $\tu{supp}\hat f$ are $G$-non-degenerate. Then $\left(\tu{Hess }\phi_E\right)_{|N_{(z,t,g)}\mathcal C_E}$ is a non-degenerate bilinear form.
\end{prop}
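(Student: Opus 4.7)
The plan is to show that at a critical point $(z_0, t_0, g_0) \in M_{k,\gamma}$, the kernel of the complex Hessian $\phi_E''$ coincides with $T_{(z_0, t_0, g_0)} M_{k,\gamma}$, which has dimension $\dim F$ by Proposition \ref{prop:DecompIsoGamma}. The inclusion $T_{(z_0, t_0, g_0)} M_{k,\gamma} \subset \ker \phi_E''$ is automatic, since $\nabla \phi_E$ vanishes along $\mathcal{C}_{\phi_E}$, so it suffices to prove $\dim \ker \phi_E'' \leq \dim F$.

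Introduce the auxiliary map $\Psi(z, t, g) := g\Phi_t(z) - z$, which vanishes on $\mathcal{C}_{\phi_E}$. The identity $\phi_E^2 = \tfrac{1}{4}\langle (\mathds{1} - \hat W_t)\Psi, \Psi\rangle$ together with the non-degeneracy of the quadratic form defined by $(\mathds{1} - \hat W_t)$ shows that the imaginary part of the Hessian of $\phi_E$ at the critical point is the pull-back of a non-degenerate quadratic form by $\Psi_{*, (z_0, t_0, g_0)}$. Since $\phi_E^2 \geq 0$ with equality on $\mathcal{C}_{\phi_E}$, this Hessian is positive semi-definite, and a vector $v$ belongs to the kernel of $\phi_E''$ if and only if $\Psi_{*, (z_0, t_0, g_0)}(v) = 0$. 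The problem thus reduces to showing $\dim \ker \Psi_* \leq \dim F$.

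Parameterizing $g = g_0\exp(sA)$ with $A \in \g$, a direct computation using the $G$-equivariance of $\Phi_t$ gives, for $v = (\delta z, \delta t, A)$,
\[
\Psi_*(v) = \bigl[(g_0\Phi_{t_0})_{*, z_0} - \mathds{1}\bigr](\delta z) + X_H(z_0)\,\delta t + (\mathrm{Ad}(g_0)A)\, z_0.
\]
Decompose $\delta z = \delta z_1 \oplus \delta z_2 \oplus \delta z_3$ along the splitting $T_{z_0}\R^{2n} = J\f z_0 \operp \f z_0 \operp \mathcal{R}$ from (\ref{eq:DecompTangentSpaceF}) and use the block form (\ref{eq:GeneralFormDgPhi}) of $(g_0\Phi_{t_0})_{*, z_0}$. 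The $\mathcal{R}$-component of $\Psi_*(v) = 0$ reads $(\mathcal{P} - \mathds{1})\delta z_3 + (\text{linear in } \delta z_1) = 0$; Hypothesis \ref{hyp:GNonDeg} provides the invertibility of $\mathcal{P} - \mathds{1}$, so $\delta z_3$ is uniquely determined by $\delta z_1$. On the complementary subspace $J\f z_0 \oplus \f z_0$, symplecticity of $(g_0\Phi_{t_0})_{*, z_0}$ couples the blocks $\mathcal{A}$ and $\mathcal{B}$ through the non-degenerate pairing $J\f z_0 \times \f z_0 \to \R$ induced by $\omega$. The $J\f z_0$-component equation $(\mathcal{A} - \mathds{1})\delta z_1 = 0$ then restricts $\delta z_1$ to a subspace of dimension $\dim \ker(\mathcal{B} - \mathds{1}) = 1 + \dim\ker(\mathrm{Ad}(g_0) - \mathds{1})$ by Lemma \ref{lem:GeneralFormDgPhi}, while the $\f z_0$-component fixes $(\delta t, A)$ modulo the Lie algebra of the stabilizer $F_{z_0}$. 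A bookkeeping of these degrees of freedom yields exactly $\dim F$ independent solutions, matching $\dim T M_{k,\gamma}$.

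The main obstacle is the symplectic duality argument coupling $\mathcal{A}$ and $\mathcal{B}$, together with the careful accounting of the stabilizer $F_{z_0}$ in the final count. The overall structure follows \cite[Proposition 4.2]{Cas06}, where the analogous statement is established under the more restrictive Hypotheses \ref{hyp:Reduction}--\ref{hyp:NondegOrbits}; in that argument Hypothesis \ref{hyp:Reduction} is used only to guarantee the invertibility of the transverse block of the linearized reduced flow, a role now played directly by the $G$-non-degenerate assumption \ref{hyp:GNonDeg} via Lemma \ref{lem:GeneralFormDgPhi}.
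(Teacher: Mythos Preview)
Your reduction to $\ker\Psi_*$ is the weak point. The equivalence ``$v\in\ker\phi_E''\Leftrightarrow\Psi_*(v)=0$'' does not hold: knowing $\Psi_*(v)=0$ only tells you that $v$ lies in the null space of the Hessian of the \emph{imaginary} part $\phi_E^2$, whereas the real kernel you need is $\ker_\R(\tu{Hess}\,\phi_E)=\ker(\tu{Hess}\,\phi_E^1)\cap\ker(\tu{Hess}\,\phi_E^2)$. The real part $\phi_E^1$ contributes the two further constraints $\delta z\perp\nabla H(z_0)$ and $\delta z\perp J\g z_0$, i.e.\ $\delta z_1=0$ in your decomposition. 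These are precisely the conditions recorded in Cassanas's explicit computation \cite[Prop.~4.3]{Cas06}, which the paper quotes as (\ref{eq:realKernel}).

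Without $\delta z_1=0$ your bookkeeping cannot close. Indeed, symplecticity of $(g_0\Phi_{t_0})_{*,z_0}$ together with the block form (\ref{eq:GeneralFormDgPhi}) forces $\mathcal A^{*}=\mathcal B^{-1}$ with respect to the $\omega$-pairing of $J\f z_0$ and $\f z_0$, so $\dim\ker(\mathcal A-\mathds 1)=\dim\ker(\mathcal B-\mathds 1)\geq 1$ (the direction $J\nabla H$). Carrying your own count through then gives
\[
\dim\ker\Psi_*=\dim F+\dim\ker(\mathcal A-\mathds 1)>\dim F,
\]
so $\ker\Psi_*$ is strictly larger than $T_{(z_0,t_0,g_0)}M_{k,\gamma}$ and the argument stalls. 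The fix is exactly what the paper does: take the full real-kernel description (\ref{eq:realKernel}), read off $\alpha_1=0$ from the two orthogonality conditions, use \ref{hyp:GNonDeg} to kill $\alpha_R$, and then count the remaining linear system in $(\alpha_2,\tau,A)$, which has full rank by \ref{hyp:H2prime} and solution space of dimension $\dim F$.
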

\begin{proof}
Recall that on a $d$-dimensional Riemannian manifold $(M,g)$ the Hessian of a complex valued function $\phi$ on its critical set is a complex valued symmetric 2-form given by
\[
 \tu{Hess }\phi:=\nabla d\phi.
\]
The Hessian is said be non-degenerate on a subspace $V\subset T_xM$ if for $v\in V$ and $\tu{Hess }\phi(v,w)=0$ for all $w\in V$ we have $v=0$. By the metric $g$ we can identify $\tu{Hess }\phi$ with a complex valued $d\times d$ matrix $B=\Re(B)+i\Im(B)$. We denote the real kernel of the Hessian by  $\ker_\R\tu{Hess }\phi(x):=\ker\Re(B)\cap\ker \Im(B)$, and recall that by \cite[Lemma 4.3.5]{Cas05} we have for $(z,t,g)\in \mathcal C_{\phi_E}$ the equivalence 
 \[
  \left(\tu{Hess }\phi_E\right)_{|N_{(z,t,g)}\mathcal C_E}\tu{ is non-degenerate } \Leftrightarrow \tu{ker}_\R ( \tu{Hess }\phi_E(z,t,g))= T_{(z,t,g)}\mathcal C_{\phi_E} .
 \]
By the definition of the critical set, $d\phi_E$ vanishes on $\mathcal C_{\phi_E}$. Consequently for $v\in T_{(z,t,g)}\mathcal C_{\phi_E}$ the one form $\nabla_{v}d\phi_E$ equals zero, which implies that $\tu{ker}_\R ( \tu{Hess }\phi_E)\supset T_{(z,t,g)}\mathcal C_{\phi_E}$. Thus, it suffices to show that in each point $(z,t,g)$ the dimensions of these two linear subspaces of $T_{(z,t,g)}(\R^{2n}\times F)$ coincide.
Furthermore, we can directly use the calculations in \cite[Prop 4.3]{Cas06} by which
\begin{equation}\label{eq:realKernel}
 \begin{array}{rcl}
 \tu{ker}_\R ( \tu{Hess }\phi_E)&=&\{(\alpha,\tau,A)\in \R^{2n}\times\R\times \g:~\alpha\perp\nabla H(z),\alpha\perp J\g z,\\
 &&\tau J \nabla H(z)+Az+\left((g\Phi_t)_{*,z}-Id)\right)\alpha=0\}.
 \end{array}
\end{equation}
Recall that under the $G$-non-stationary hypothesis we have from (\ref{eq:DecompTangentSpaceEquiv}) the decomposition 
\[
 T_z\R^{2n}=J\f z\operp \f z\operp\mathcal R,
\]
where $J\f z$ and $\f z$ both have dimensions equal to $\dim F- \dim F_z$. According to this decomposition we write $\alpha = (\alpha_1,\alpha_2,\alpha_R)$. The first two conditions $\alpha\perp\nabla H(z),\alpha\perp J\g z$ in (\ref{eq:realKernel}) immediately imply $\alpha_1=0$. As $\tau J \nabla H(z)+Az\in \f z$, the third condition in (\ref{eq:realKernel}) implies, after using the general form (\ref{eq:GeneralFormDgPhi}) of $(g\Phi_t)_{*,z}$, that
\[
 (\mathcal P-\mathds 1)\alpha_R=0,
\]
and from the $G$-non-degenerate orbit property we directly conclude that $\alpha_R=0$. The third condition therefore reduces to 
\[
 \tau J \nabla H(z)+Az +(\mathcal B-1)\alpha_2 = 0.
\]
It forms a system of $\dim F-\dim F_z$ linear equations with $2\dim F-\tu{dim} F_z$ variables. From the $G$-non-stationary condition it follows that this system of equations has full rank. Consequently we obtain $\dim \tu{ker}_\R ( \tu{Hess }\phi_E)=\dim F=\dim \mathcal C_{\phi_E}$ which finishes the proof of Proposition \ref{prop:NonDegHess}.
\end{proof}

Taking everything together we finally arrive at
\begin{thm}\label{thm:GutzwillerAsym}
 Let $H$ be a Hamiltonian fulfilling Hypothesis \ref{hyp:regularityH} and \ref{hyp:closed} which is invariant under the compact symmetry group $G\subset O(n)$. Let furthermore $f\in \mathcal S(\R)$ be such that $\supp \hat f \subset [-T,T]\setminus\{0\}$ for some $T>0$. If the Hamilton dynamics is $G$-non-stationary (Hypothesis \ref{hyp:H2prime}) for a given energy $E\in [E_1,E_2]$  and all relative periodic orbits in $\Omega_0\cap\Sigma_E$ having a relative period contained in $\supp \hat f$ are $G$-non-degenerate (Hypothesis \ref{hyp:GNonDeg}), then $\rho_{E, h}^\chi(f)$ has a complete asymptotic expansion in powers of $ h$ given by
 \begin{eqnarray*}\nonumber
 \rho_{E, h}^{\chi}(f) =\frac{d_\chi}{2\pi}\sum\limits_{\gamma\in (\Gamma_E^{\tu{rel}})_T}&\Bigg[&\sum\limits_{k\in \Z} e^{\phi_{k,\gamma}}\hat f(kT_\gamma)\Bigg(\int\limits_{M_{k,\gamma}} \overline{\chi(g)}d(z,t,g)d\sigma_{M_{k,\gamma}}(z,t,g)\Bigg)\\
 &&+\sum\limits_j  h^ja_{j,\gamma,k}\Bigg].
\end{eqnarray*}
Here $M_{k,\gamma}$ was defined in (\ref{eq:M_k_gamma}), $d\sigma_{M_{k,\gamma}}$ is the induced measure on this smooth submanifold, $\phi_{k,\gamma}={\int\limits_0^{kT_\gamma}p_s\dot q_s ds}$ is the constant value the phase function takes on $M_{k,\gamma}$, and the density which has to be integrated is given by
\[
d(z,t,g)=\tu{det}_*^{-1/2}\left(\frac{\tu{Hess }(\phi_E)_{|N\mathcal C_{\phi_E}}}{i}\right)\tu{det}_*^{-1/2}\left(\frac{A+iB-i(C+iD)}{i}\right)
\]
where the matrices $A,B,C,D$ are as in Proposition \ref{prop:OscIntGutzwiller}.

All the lower order coefficients $a_{j,\gamma,k}$ are tempered distributions with support in $kT_\gamma$ applied to $\hat f$,  which can in principle be calculated from the stationary phase approximation.
\end{thm}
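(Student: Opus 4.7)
The plan is to start from the oscillatory integral representation of $\rho_{E,h}^\chi(f)$ provided by Proposition \ref{prop:OscIntGutzwiller}, and then apply the stationary phase method for complex phase functions (in the form of H\"ormander, see \cite{HoeI}) to extract the asymptotic expansion. Since $\hat f$ has compact support in $[-T,T]\setminus\{0\}$, the $t$-integration is confined to a compact set, and Proposition \ref{prop:DiscreteGNonDeg} ensures that the set of relative periodic orbits $(\Gamma_E^{\tu{rel}})_T$ is finite. Therefore the integrand has compact support in $(z,t,g)$, which justifies treating the oscillatory integral as an integral in the classical sense. To localize near the critical set, I would introduce a smooth partition of unity on $\R^{2n}\times\supp\hat f\times G$, one piece localized near each connected component $M_{k,\gamma}$ of $\mathcal{C}_{\phi_E}\cap(\R^{2n}\times\supp\hat f\times G)$ as decomposed in Proposition \ref{prop:GutzwSmoothCrit} (via Proposition \ref{prop:DecompIsoGamma}), and one piece away from the critical set. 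Away from the critical set, $\tu{Im}\,\phi_E>0$ or $d\phi_E\neq 0$, so non-stationary phase gives an $O(h^\infty)$ contribution.

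On each piece localized near $M_{k,\gamma}$, the two hypotheses needed for the generalized stationary phase theorem with complex phase are met: by Proposition \ref{prop:GutzwSmoothCrit} the critical set is a smooth submanifold of dimension $\dim F$, and by Proposition \ref{prop:NonDegHess} the Hessian $\tu{Hess}\,\phi_E$ is non-degenerate on the normal bundle $N\mathcal{C}_{\phi_E}$. The stationary phase theorem then yields, for each fixed $(k,\gamma)$, a complete asymptotic expansion in powers of $h$ whose leading term is an integral over $M_{k,\gamma}$ of the symbol restricted to the critical set, multiplied by $e^{i\phi_E/h}$ and by the factor $\tu{det}_*^{-1/2}(\tu{Hess}(\phi_E)_{|N\mathcal{C}_{\phi_E}}/i)$ coming from the Gaussian integration in the normal directions. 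Combined with the leading symbol (\ref{eq:SymbolLeadingTerm}) from Proposition \ref{prop:OscIntGutzwiller}, this produces precisely the density $d(z,t,g)$ stated in the theorem, while the cutoffs $\chi_2$ and $\zeta(H(z))$ both equal $1$ on $\mathcal{C}_{\phi_E}\cap\Sigma_E$ and so drop out.

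The main obstacle, and what I would address next, is showing that $\phi_E$ is constant on each $M_{k,\gamma}$ with the stated value $\phi_{k,\gamma}=\int_0^{kT_\gamma} p_s \dot q_s\, ds$. On the critical set we have $gz_t=z$ and $z\in\Omega_0$, so $\phi_E^2\equiv 0$ and the term $\tfrac12\langle g^{-1}z,Jz\rangle=\tfrac12\langle z_t,Jz\rangle$ in $\phi_E^1$ can be combined with the integral term using integration by parts and the Hamilton equations; the contribution $(E-H(z))t$ vanishes on $\Sigma_E$. A short symplectic computation then identifies the remaining quantity with the classical action $\int_0^{kT_\gamma} p_s\dot q_s\, ds$ along the orbit, and the $G$-invariance of $H$ together with $z\in\Omega_0$ ensures that this value does not depend on the choice of point on the $F$-orbit $M_{k,\gamma}$. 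Once this identity is in hand, the factor $e^{i\phi_{k,\gamma}/h}$ comes out of the integral and one is left with the stated density integrated against $\overline{\chi(g)}$ with respect to the induced measure $d\sigma_{M_{k,\gamma}}$.

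Finally, the $\hat f(kT_\gamma)$ factor arises because on $M_{k,\gamma}$ the $t$-coordinate is pinned to $kT_\gamma$, so $\hat f(t)$ evaluated at the critical set gives exactly this value; summing the contributions of all finitely many components $M_{k,\gamma}$ with $kT_\gamma\in\supp\hat f$ produces the stated formula. The higher-order coefficients $a_{j,\gamma,k}$ come from the standard expansion of the stationary phase theorem applied to the full symbol $a_g(z,h)$, and since $a_g$ is supported near $\Sigma_E$ while $\hat f$ enters only through its value and derivatives at $t=kT_\gamma$, these coefficients are indeed tempered distributions in $\hat f$ supported at $kT_\gamma$. Care must be taken with the choice of branch in $\tu{det}_*^{-1/2}$; this follows the conventions of \cite{Cas06} and \cite{HoeI} so that the two determinant factors combine coherently.
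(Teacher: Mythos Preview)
Your proposal is correct and follows essentially the same route as the paper: write $\rho_{E,h}^\chi(f)$ as the oscillatory integral of Proposition~\ref{prop:OscIntGutzwiller}, invoke Proposition~\ref{prop:GutzwSmoothCrit} and Proposition~\ref{prop:NonDegHess} to verify smoothness of the critical set and non-degeneracy of the transversal Hessian, and then apply the generalized stationary phase theorem. The paper's own proof is extremely terse (three sentences, citing \cite[Theorem~3.3]{CRR99} rather than H\"ormander for the stationary phase step), whereas you spell out the partition-of-unity localization, the computation that $\phi_E$ is constant on each $M_{k,\gamma}$ with value $\int_0^{kT_\gamma} p_s\dot q_s\,ds$, and the origin of the $\hat f(kT_\gamma)$ factor---details the paper leaves implicit in its appeal to the stationary phase machinery.
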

\begin{rem}
 Again, the case of a compact symmetry group $G\subset GL(n,\R)$ can be treated by conjugating to an orthogonal subgroup (see the discussion in Section \ref{sec:group_action}).
\end{rem}
\begin{proof}
 Proposition \ref{prop:OscIntGutzwiller} allows us to write the spectral distribution as an oscillating integral. Proposition \ref{prop:GutzwSmoothCrit} together with Proposition \ref{prop:NonDegHess} assure that the critical set is smooth and that the transversal Hessian is non-degenerate. We can therefore apply the generalized stationary phase theorem (see e.g. \cite[Theorem 3.3]{CRR99}) to (\ref{eq:OscIntegralEquiv}) finishing the proof of Theorem \ref{thm:GutzwillerAsym}.
\end{proof}

\end{document}